\newcommand{\Sl}{\mathfrak{sl}_2}
\newcommand{\z}{{{ \mathbb{Z}_2}}}
\newcommand{\Span}{{\rm span}}
\newcommand{\ad}{{\rm ad}}
\newcommand{\ddx}{\frac{\partial}{\partial x}}
\newcommand{\ddy}{\frac{\partial}{\partial y}}
\newcommand{\ddz}{\frac{\partial}{\partial z}}
\newcommand{\ddr}{\frac{\partial}{\partial \rho}}
\newcommand{\ddt}{\frac{\partial}{\partial \theta}}
\newcommand{\IM}{{\rm Im}\,}
\newtheorem{thm}{Theorem}[section]
\newtheorem{cor}[thm]{Corollary}
\newtheorem{lem}[thm]{Lemma}
\theoremstyle{definition}
\newtheorem{defn}[thm]{Definition}
\newtheorem{exm}[thm]{Example}
\newtheorem{notation}[thm]{Notation}
\numberwithin{equation}{section}
\newcounter{IssueCounter}
\newtheorem{Issue}[IssueCounter]{Issue}
\def\be {\begin{equation}}
\def\ee {\end{equation}}
\def\ba {\begin{eqnarray}}
\def\ea {\end{eqnarray}}
\def\bpr {\begin{proof}}
\def\epr {\end{proof}}
\def\bes {\begin{equation*}}
\def\ees {\end{equation*}}
\def\bas {\begin{eqnarray*}}
\def\eas {\end{eqnarray*}}
\begin{document}
\baselineskip=18pt
\renewcommand {\thefootnote}{\dag}
\renewcommand {\thefootnote}{\ddag}
\renewcommand {\thefootnote}{ }

\title{
		On the representations and 
  \(\z\)-equivariant normal form for  solenoidal Hopf-zero
  singularities
}
\author{
	Fahimeh Mokhtari\footnote{Corresponding author. Email: fahimeh.mokhtari.fm@gmail.com}
	\\
	Department of Mathematics, Faculty of Sciences\\
	Vrije Universiteit, De Boelelaan 1081a,\\ 1081 HV Amsterdam, The Netherlands
}
\date{}
\maketitle
%
%
%
%

{\bf Dedicated to Professor Jan. A Sanders on the occasion of his 70th birthday }
\begin{abstract}
		In this paper, we deal  with the solenoidal conservative  Lie algebra 
	associated to the classical normal form of Hopf-zero singular system.  We concentrate on the study of some
	representations and \(\z\)-equivariant normal form  for such singular differential equations.
	First, 	we   list some of  the representations that this Lie algebra admits.  
 The vector fields from this Lie algebra could be 
expressed  by the set of ordinary differential equations where the first  two
	 of them  are in the canonical form of a one-degree of freedom Hamiltonian system and the third one
	depends upon the  first two variables. This representation is governed  by   the  associated Poisson algebra   to one sub-family of this Lie algebra. 
	Euler's form, vector potential, and Clebsch representation  are other representations of this Lie algebra that we list here.
	We also study  the  non-potential property  of  vector fields with  Hopf-zero singularity from  this Lie algebra.
	Finally, we examine  the unique  normal form with non-zero cubic terms   of this family 
	 in the  presence  of the  symmetry group \(\z.\)
	The theoretical results of normal form theory   are
	illustrated with the  modified Chua's oscillator.
	\\\\ 
	{{\bf Key words.} Hopf-pitchfork singularity; Conservative and solenoidal vector field; Clebsch representation; Euler's form; Vector potential;  Normal form.}
	\\\\
	{\bf 2010 Mathematics Subject Classification.}\, 34C20; 34A34.
\end{abstract}
\:\:\:\:\ \ \rule{7.0in}{0.012in}
\section{Introduction}
This investigation is a continuation of \cite{GazorMokhtariInt} in which   
the  maximal solenoidal conservative Lie algebra  of  classical normal form  of Hopf-zero singularities (in the sense of \cite{arnold1971matrices}) was introduced. This Lie algebra was denoted  by \(\mathscr{L}.\) 
The  simplest normal form, simplest parametric normal form, and radius of convergence corresponding to the second level
normal form  of  this type of singularities were explored  there. 
This present paper has two purposes: first,  we intend to list some of   the representations  that 
the vector fields from \(\mathscr{L}\)  admit.
Solenoidal property  of vector fields  from \(\mathscr{L}\)  allows  us  to  express  these  vector fields by
local Euler's potentials,
  vector potential, and local Clebsch potentials, see \cite{mezic1994integrability,barbarosie2011representation}.
 The second purpose is   to  explore the  \(\z\)-equivariant unique  normal form 
 of the  solenoidal conservative family associated to Hopf-pitchfork
 singularities. 
 We denote the Lie algebra  of such a family  
by \(	{\mathscr{L}}^{\z}.\)
Now, 
we 
elaborate  in general terms  our results in the following.

In the framework of classical mechanics,  the Poisson structure  has a key role in the description  of Hamiltonian dynamics
(see preface of \cite{fernandes2014lectures}). 
In  the literature of normal form  theory, this structure is  employed to   facilitate  the normal form study of singularities,
an exhaustive
treatment of this subject can be found in  \cite{BaidSand91,baidersanders}.
The first  issue we address  here   involves constructing the Poisson algebra \(\mathcal{P}\) for
a  Lie subalgebra of  \(\mathscr{L}\), in order to   give a representation for \(\mathscr{L}\).
Having established the Poisson structure we will be able to present  the vector fields from  this Lie algebra
by  a set of ordinary differential equations, the   two first of
them are in the canonical form of a one-degree of freedom Hamiltonian system and the third one
depends upon the two first variables. Noting that,
for a given   three-dimensional solenoidal vector field which  possesses one-parameter symmetry group with solenoidal and infinitesimal generators, one would be able to  present the system  in the normal form as given in \cite{mezic1994integrability}.  
Hence,  beside the Poisson structure 
 following  \cite[Remark 2.4]{GazorMokhtariInt}  and \cite{mezic1994integrability} that form   could be derived.

Solenoidal  dynamical systems  are important in a huge variety of applications.
These vector fields, also known as incompressible vector fields, occur in many settings: velocity field of an incompressible fluid, magnetic field, 
{\em etc.},
see  for instance \cite{priest1996bifurcations,brown1999topological}.
Exploiting  tools from geometric dynamics,
solenoidal vector fields   may be represented in 
some interesting  and physically  significant  representations such as 
Euler's form and  vector potential (see \cite{barbarosie2011representation} and references therein).
In the following, we 
briefly  review these representations.  
If \(v\) be a solenoidal vector field then there exists 
a  vector potential \(\bf A\)  such that \(v=\nabla\times \bf A.\) 
From the point of view of electromagnetism,  \(\bf A\)  is called  magnetic vector potential \cite{semon1996thoughts}.
Euler's form  is another representation that  solenoidal vector fields admit (see \cite[page 22]{truesdell1954kinematics} and 
\cite[page 48]{udriste2012geometric}).
Based on this form,
one can find  two  independent  invariant functions
 \(\alpha\) and \(\beta\) from which the vector field \(v\) is derivable via
\(v=\nabla \alpha \times \nabla \beta.\) 
Applications of these representations can be found in various   
books and papers, for instance, see \cite{aharonov1959significance,barbarosie2011representation,ray1967topological,semon1996thoughts}. In  \cite{barbarosie2011representation} the application of Euler's form in  the partial differential equation 
has been discussed. For the significance of vector potential  in quantum theory,
we refer the reader to  \cite{aharonov1959significance}.

In \cite{GazorMokhtariSandersBtriple}  the authors
express  
a  Lie algebra of completely  integrable solenoidal  triple-zero singularities via  Euler's form and vector potential.  As in their studies,
due  to  the  solenoidal property  of \(\mathscr{L},\)
here we shall present  any 
vector field  in this Lie algebra 
using   vector potentials and Euler's form.
Further, the 
non-potential property of this family with Hopf-zero singularity is examined. Thus, these vector fields cannot be derived by the  gradient of a scalar valued  function, instead, 
these vector fields are  expressible  in terms of the  lamellar and  complex lamellar vector fields.
More precisely, any vector field \(v\in \mathscr{L}\)  could be expressed  by  \(v=f_1\nabla f_2+ \nabla f_3\)
where \(f_1,f_2,\) and \(f_3\) are scalar valued functions.
This representation is named Monge representation or  Clebsch representation (see \cite[page 27]{truesdell1954kinematics} and \cite[Section 2.4]{udriste2012geometric}).
For the  application of this  representation in thermodynamics, we refer to \cite[Subsection 9.11]{udriste2012geometric}.

Progress towards deriving  these representations 
has significant practical implications for magnetic fields  since  these vector fields are solenoidal. 
For instance, in \cite{priest1996bifurcations,brown1999topological}  these representations have been employed for studying magnetic reconnection  at three--dimensional null points.
In  \cite{longcope2005topological}  the authors  used the magnetic fields  to interpret the solar flares.
 Hence, the results of the present paper  should provide  the researchers in this area 
with sufficiently powerful tools 
to analyze  the motion of magnetic fields.
As a matter of fact, each of these representation gives an interpretation about the structure of
magnetic fields.
 Thereby,
from  these researches,
we believe that these studies  would be  remarkably useful in application. 
 In the remainder  of introduction,  we focus  on the second  aim of the paper.

In order to apply the
general methods of bifurcation theory to singularities, it is necessary to  apply  normal form theory.
Roughly speaking,
normal form  theory is to  simplify the nonlinear part of vector fields with permissible transformations, see \cite{MurdBook,SandersSpect}
and
\cite[chapters 9-13]{sanders2007averaging}.
In this paper, we are  also interested in treating the simples normal form classification of  solenoidal family   associated to  the  classical normal form of 
Hopf-pitchfork singularities.
The problem of the  normal form  of singular dynamical systems has been studied by
many authors.
Before explaining  our results in detail, we explain
some of this previous work in the following. 

Baider and Sanders \cite{BaidSand91,baidersanders} studied  the unique  normal form of   Bogdanov-Takens  and  Hamiltonian    Bogdanov-Takens  singularities.
The paper \cite{GazorMokhtariSandersEulHZ}  studied the infinite level  normal form of the
Lie algebra of  quasi-Eulerian 
Hopf-zero vector fields.
For    the  conservative-nonconservative
decomposition of the classical normal form of Hopf-zero dynamical systems and complete classifications of the  simplest normal form  of this  singularity, the reader is referred to \cite{GazorMokhtari}. 
In \cite{AlgabaHopfZ} the \(\z\)-equivariant  normal form for Hopf-zero vector fields are computed 
 under the assumption that 
the  cubic terms be non-zero, for results on the bifurcation of  Hopf-pitchfork singularities, see \cite{algaba2000tame,algaba1999three,algaba1999codimension}. More  studies regarding the normal form of dynamical system could be found in  \cite{GazorMoazeni,Algaba30,GazorMokhtariSandersBtriple,Murdn}.

In the studies mentioned above and in the literature, the researchers did not investigate the problem of  classifying the unique  normal form for
\(\z\)-equivariant
solenoidal conservative Hopf-zero vector fields. 
As  announced at the beginning of  this section,
our   aim  is     to treat the unique  normal form  of the following system
\begin{equation}
\label{CNF}
\left\{
\begin{aligned}
\frac{{\rm d}{x}}{{\rm d }t}&=2 a^0_1 x\rho^2+\sum a^l_k (k-{2l}+1){x}^{{2l}+1}{\rho}^{2(k-{2l})},
\\
\frac{{\rm d}{\rho}}{{\rm d }t}&=-\frac{a^0_1\rho^3}{2}-\sum a^l_k \frac{({2l}+1)}{2} {x}^{2l}{\rho}^{2(k-{2l})+1},
\\
\frac{{\rm d}{\theta}}{{\rm d }t}&=1+\sum  b^l_k{x}^{{2l}}\rho^{2(k-{2l})},
\end{aligned}
\right.
\end{equation}
where  constants  \(a^l_k, b^l_k\) are real numbers,  \(0\leqslant2l\leqslant k,  0\leqslant k\) \(b^0_0=a^0_0=0,\) and \( a^0_1\neq 0.\)
The associated  first integral of this system  is 
\bas
s(x,\rho)&:=& a^0_1x\rho^4 +\sum a^l_k x^{2l+1} {\rho}^{2({k-2l+1})}, 
\eas 
with \(0\leqslant2l\leqslant k.\) 
In this work, the   unique normal form  study of the above system proceeds in a manner parallel to the study of the  solenoidal 
Hopf-zero
vector fields 
without symmetry \cite{GazorMokhtariInt}. 
 We  use  naturally  their  algebraic structures  to construct the Lie algebra \(	{\mathscr{L}}^{\z}.\)
We wish to stress that, despite the fact that  \(	{\mathscr{L}}^{\z}\) is  the Lie subalgebra of \(\mathscr{L}\), 
the unique  normal form of \eqref{CNF}
cannot be obtained
from  the unique normal form of  volume preserving Hopf-zero  vector fields  given there.
The leading term that plays a dominant role in our  normal form study  is  cubic, whereas there 
the quadratic term was the leading term. 
Hence,  the normalization  problem that is performed here differs from  that was  studied  in \cite{GazorMokhtariInt}. 
Finally, we shall  present  the  unique normal form  of \eqref{CNF} in  four  different  representations, see Theorem \eqref{4rep}.
\subsection{Outline of the paper}
This paper has the following organization. In Section \ref{repres}, first,  we  recall  some  notations and definitions
that   needed throughout the paper. Then, 
we provide
the associated Poisson algebra \(\mathcal{P}\)  for  one Lie subalgebra  of  \(\mathscr{L}.\) Following the  Poisson algebra,  a form for solenoidal Hopf-zero vector fields
using Hamilton's equations  as introduced in \cite{mezic1994integrability} is presented.
It is also
in this section that 
the
representations such as 
Euler's form,  vector potential,  and Clebsch representation for Lie algebra \(\mathscr{L}\) are given.

In Section  \ref{HZFO},  we introduce and formulate the  
Lie algebra 	\(	{\mathscr{L}}^{\z}.\)
We also recall the general framework for computing  the normal form for
\({	\Gamma}\)-equivariant
singularities required to  study  the unique  normal form of  the class of singularities  under consideration.

In Section \ref{NFC}, we study   the  unique  normal form   of 
\eqref{CNF}. One symmetry of unique normal form is detected. We  also  present  the four alternative representations of  the unique normal from rely on the results on  Section \ref{repres}.

The final section
is  dedicated to make  symbolic computations  of our normal form study. Some sufficient  conditions on the coefficients of any Hopf-pitchfork system,  under which the lower order truncation of  the    classical normal form  of the  original system 
takes the form \eqref{CNF} are given.
Moreover, the modified Chua's oscillator serves to demonstrate  our  main results in the normal form. All  of the  computations are  performed using Maple \cite{char2013maple}.
\section{ Representations  of Lie algebra \(\mathscr{L}\)}\label{repres}
\subsection{Preliminaries}
In this section, we   present the  main results regarding the ways that the vector fields from  \(\mathscr{L}\)
would be expressible. The  Poisson algebra associated to the sub-family of \(\mathscr{L}\) and tools from geometric dynamics are adapted to  
obtain  these forms.
Before going to the  main results, we  give a review of some definitions and facts from  the Lie algebra \(\mathscr{L}\) which are fundamental to what follows. Furthermore, after   we establish some notation,  we recall some  general information   required 
to  study the  representations of vector fields  in \(\mathscr{L}.\)

 The  classical normal form of solenoidal conservative Hopf-zero vector field in cylindrical coordinates   is  given by 
\begin{equation}
\label{Eq1}
\left\{
\begin{aligned}
\frac{{\rm d}{x}}{{\rm d }t}&= \sum_{} (k-l+1)a^l_k x^{l+1} {\rho^2}^{(k-l)}, 
\\
\frac{{\rm d}{\rho}}{{\rm d }t}&= -\sum_{} \frac{(l+1)}{2}a^l_kx^{l} {\rho}^{2(k-l)+1},
\\
\frac{{\rm d}{\theta}}{{\rm d }t}&= \sum_{} b^m_nx^{m} {\rho}^{2(n-m)}, 
\end{aligned}
\right.
\end{equation}
where \(-1\leqslant l\leqslant k, 0\leqslant k,\)  \(0\leqslant m\leqslant n,\) \(b^{-1}_0=a^0_0=0,\) and \(a^l_k, b^m_n\in \mathbb{R},\)
see \cite[Equation 1.1]{GazorMokhtariInt}. 
This class of vector fields was derived  by 
\(\Sl\)-decomposition of the  classical normal form of Hopf-zero  bifurcation 
\cite{GazorMokhtari}.
َAnd indeed,  the above system could be formulated using the Lie algebraic structure as follows.

  We recall\cite{GazorMokhtariInt,GazorMokhtari}  that   the   maximal Lie algebra  of solenoidal Hopf-zero  classical normal form
is given by 
\(
{\mathscr{L}}=\mathscr{F}\oplus\mathscr{T},
 \)
where
\bas
\mathscr{F}=\Span\left\{ \sum a^l_k F^l_k\,|\,0\leqslant k,-1\leqslant{l}\leqslant{k}, a^l_k\in \mathbb{R}\right\},
\quad
\mathscr{T}=\Span\left\{ \sum b^l_k \Theta^l_k\,|\,0\leqslant{l}\leqslant{k}, b^l_k\in \mathbb{R}\right\},
 \eas
and
\ba
\label{FT}
F^l_k&= &x^{l} {(y^2+z^2)}^{k-l}\left((k-l+1) x\ddx-
\frac{(l+1)}{2} y\ddy-\frac{(l+1)}{2} z\ddz\right),\quad   -1\leqslant{l}\leqslant{k},\;\;\;
\\\label{TTerm}
\Theta^l_k&=& x^{l}(y^2+z^2)^{k-l}\left(z\ddy-  y\ddz\right),\qquad
\qquad\qquad\qquad\qquad\qquad\qquad\,\,\,\,\,\,\;\;
0\leqslant{l}\leqslant{k}.\quad\;\;\;
\ea
Furthermore,  the algebra of the first integral for \(\mathscr{F}\) is as
\(
\left\langle x^{l+1}(y^2+z^2)^{k-l+1}\right\rangle_{-1\leqslant{l}\leqslant{k}}^{ 1\leqslant l+k}\)
 and for  \(\mathscr{T}\)
is as  \(
\left\langle x,(y^2+z^2)\right\rangle
\).
 Expressed in terms of the  cylindrical polar   coordinates  \((x, \rho, \theta),\) the  preceding vector fields  
will be given by the expressions
\ba\label{cylindFT}
F^{l}_k&=&  x^{l}\rho^{2(k-l)}\left( (k-l+1)x\ddx- \frac{(l+1)}{2} \rho\ddr\right),
\\\label{cylindTT}
\Theta^l_k&=&{x}^{{l}}\rho^{2(k-{l})}\ddt.
\ea

Using  the structures  given above,  
system \eqref{Eq1}  can be  recast  to   
\bas
v=\sum_{k=0}^{\infty}\sum_{l=-1}^k a^l_k F^l_k+\sum_{n=0}^{\infty}\sum_{m=0}^n b^m_n \Theta^m_n.
\eas
As mentioned before, the simplest normal form and simplest parametric normal form of foregoing system  with the assumption  \(a^{-1}_0\neq 0\) were explored
using \(\Sl\)-style in
\cite{GazorMokhtariInt}.  Now, we fix some notation.
\begin{notation}
	The following notation  is used throughout the paper.
	\begin{itemize}
	\item   Define \(v=(v_1,v_2,v_3)\in \mathbb{R}^3\) by \(v=v_1\cdot\mathbf{ {e}_{x}}+v_2\cdot\mathbf{ {e}_{y}}+v_3\cdot\mathbf{ {e}_{z}}.\)
	We denote
 \(F^l_k=dx(F^l_k)\ddx+dy(F^l_k)\ddy+dz(F^l_k)\ddz.\)
	\item The symbol  \(\nabla\)
	indicates the  gradient operator of  the vector field.
	\item  The Pochhammer \(k\)-symbol notation for any \(a,b\in\mathbb{R}\) and \(k\in \mathbb{N},\) is given by
	\bas
	\left(a\right)^k_b:=\prod_{j=0}^{k-1}(a+jb).
	\eas
\end{itemize}
\end{notation}
\begin{defn}\label{Geom}
	Consider  the dynamical system \({\dot x}=v,\) with \(x\in \mathbb{R}^3.\)  	
	\begin{itemize}
		\item 
			The vector field \(v\) is said  to be  potential, or locally potential
		if \(\nabla \times v=0\) for all \(x\in \mathbb{R}^3.\) 
		The system
		is named  non-potential,
		otherwise,
		 see \cite[page 1]{tarasov2008quantum}.
		\item
			A vector field \(v\)  which derives from  the  gradient  of a   function is called 
lamellar (also known as gradient, or globally potential)  vector field. The  function is called the  potential function.
It follows  that a vector field is  lamellar if and only if
the system  is   potential,
see \cite[page 23]{truesdell1954kinematics} and  \cite[page 1]{tarasov2008quantum}.
\item
The  three-dimensional vector field \(v\) which has  the  representation of the  form  \(v=f \nabla g,\)
in  which  \(f\) and \(g\) are   functions,
 is named complex lamellar.
The vector field \(v\) is complex lamellar  if and only if  \(v\cdot(\nabla \times v)=0,\) that is, this 
type of vector field  is orthogonal to its curl, see \cite[page 23]{truesdell1954kinematics}.
	\end{itemize}
\end{defn}
\subsection{Poisson structure}
As  mentioned, the  algebra of the first integral for \(\mathscr{F}\) is spanned by 
  \(
  \left\langle x^{l+1}(y^2+z^2)^{k-l+1}\right\rangle_{-1\leqslant{l}\leqslant{k}}^{ 1\leqslant l+k}
 . \) 
 In what follows, we 
extend this  algebra to  the  Poisson structure  for \(\mathscr{F}\) by equipping it with an  appropriate Poisson bracket.
In accordance with  this Poisson  algebra, one representation of any  vector fields in  \(\mathscr{F} \) by the  Hamiltonian equation is given.

Performing the change of variable \(r=\rho^2\) into   the vector field   given by \eqref{cylindFT}  we obtain 
\ba\label{fcylr}
F^{l}_k&=& x^{l} r^{k-l} \left((k-l+1) x\ddx-(l+1) r\frac{\partial}{\partial r}\right).
\ea
 Now, define
 \ba\label{pl}
 \mathcal{P}:=\left\{\sum c^l_k  f^l_k\mid -1\leqslant l\leqslant k \right\},
 \ea
 where
 \(
 f^l_k:= x^{l+1}r^{k-l+1} 
 \) is the  first integral of \(F^l_k.\)   Now, we have  the following result.
 \begin{thm}
	\begin{itemize}
		Consider the vector field \(F^l_k\)  given by \eqref{fcylr} and the algebra \(\mathcal{P}\)   given by \eqref{pl}.
		The following statements  hold.
		\item [1.]  \(({\mathcal{P}}, \{\cdot, \cdot\})\) is Poisson algebra where  the Poisson bracket is given by 
\be\label{poissonb}
\{f,g\}:= {\frac{\partial}{\partial r}}  f \ddx g-  {\frac{\partial}{\partial r}}  g \ddx f, \qquad {\hbox{for all}}\,\, f,g\in \mathcal{P}.
\ee
		\item [2.] 	Hamilton's equations of \(F^l_k\)  are 
		\ba\label{HamF}
		F^l_k=\{f^l_k,x\}\ddx+\{f^l_k,r\}\frac{\partial}{\partial r}.
		\ea
				\item [3.]  \(({\mathcal{P}}, \{\cdot, \cdot\})\) and \((\mathscr{F}, [\cdot,\cdot])\) are  isomorphic Lie algebras
where  Lie isomorphism is defined  by 
	\(
\varphi: ({\mathcal{P}}, \{\cdot, \cdot\})\rightarrow (\mathscr{F}, [\cdot,\cdot]),\)  with
\(\varphi({f}^l_k)=F^l_k.
\)
\end{itemize}
	\end{thm}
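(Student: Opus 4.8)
My plan is to dispatch the three statements in turn, using throughout that \eqref{poissonb} is the canonical symplectic Poisson bracket for the conjugate pair $(r,x)$. For statement~1 this observation does most of the work: bilinearity, antisymmetry, the Jacobi identity and the Leibniz rule for $\{\cdot,\cdot\}$ are inherited verbatim from the algebra of smooth functions in $(x,r)$, so the only genuine content is that $\mathcal{P}$ is closed under both the pointwise product and the bracket. I would verify closure directly on the generators $f^l_k=x^{l+1}r^{k-l+1}$. A one-line computation gives
\begin{equation*}
f^l_k\,f^{l'}_{k'}=x^{l+l'+2}\,r^{(k+k')-(l+l')+2}=f^{\,l+l'+1}_{\,k+k'+2},
\end{equation*}
and an equally short one gives
\begin{equation*}
\{f^l_k,f^{l'}_{k'}\}=\Big[(k-l+1)(l'+1)-(l+1)(k'-l'+1)\Big]\,f^{\,l+l'}_{\,k+k'}.
\end{equation*}
In each case I would check that the new indices still satisfy $-1\leqslant L\leqslant K$. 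The product always does; for the bracket the only way to force $L=l+l'<-1$ is $l=l'=-1$, and precisely then the scalar factor vanishes, so $\mathcal{P}$ is stable. This boundary bookkeeping is the one point of statement~1 that is not automatic.

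Statement~2 is a direct evaluation. From \eqref{poissonb} one reads off $\{f,x\}=\frac{\partial}{\partial r}f$ and $\{f,r\}=-\ddx f$ for any $f$. Applying this to $f^l_k$ with $\frac{\partial}{\partial r}f^l_k=(k-l+1)x^{l+1}r^{k-l}$ and $\ddx f^l_k=(l+1)x^l r^{k-l+1}$, and substituting into \eqref{HamF}, produces exactly the vector field \eqref{fcylr}; no obstruction arises here.

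For statement~3 I would first observe that $\varphi$ is a linear isomorphism of vector spaces, since it carries the monomial basis $\{f^l_k\}$ of $\mathcal{P}$ to the basis $\{F^l_k\}$ of $\mathscr{F}$. The substance is that $\varphi$ intertwines the two brackets. To organise this I would introduce, for an arbitrary $f$, the vector field
\begin{equation*}
X_f:=\{f,x\}\ddx+\{f,r\}\tfrac{\partial}{\partial r}=\tfrac{\partial}{\partial r}f\,\ddx-\ddx f\,\tfrac{\partial}{\partial r},
\end{equation*}
so that statement~2 says $\varphi(f^l_k)=X_{f^l_k}$ and, by linearity, $\varphi=X|_{\mathcal{P}}$. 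The whole claim then reduces to the single identity $X_{\{f,g\}}=[X_f,X_g]$: granting it, the restriction of the Lie homomorphism $f\mapsto X_f$ to the bracket-closed subalgebra $\mathcal{P}$ is again a Lie homomorphism, and together with the bijection this yields the asserted isomorphism onto $\mathscr{F}$.

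The main obstacle is to establish $X_{\{f,g\}}=[X_f,X_g]$ \emph{with the correct sign}. This is the familiar correspondence between Hamiltonians and Hamiltonian vector fields, but the convention in \eqref{poissonb} differs by an orientation from the usual prescription $\iota_{X}\omega=df$, so I would not invoke a black-box theorem; instead I would prove it by the short two-component commutator computation, expanding $[X_f,X_g]$ and matching it against $X_{\{f,g\}}$ coefficient by coefficient, where equality of mixed partials $\partial_r\partial_x=\partial_x\partial_r$ is what makes the surplus terms cancel. As an independent confirmation of the sign I would check that the induced structure constants
\begin{equation*}
[F^l_k,F^{l'}_{k'}]=\Big[(k-l+1)(l'+1)-(l+1)(k'-l'+1)\Big]F^{\,l+l'}_{\,k+k'}
\end{equation*}
coincide with the Lie bracket of $\mathscr{F}$ already recorded in \cite{GazorMokhtariInt}.
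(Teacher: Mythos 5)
Your proposal is correct and follows essentially the same route as the paper, which simply asserts that items 1 and 2 follow by direct calculation and that item 3 follows by comparing the structure constants of \(\{\cdot,\cdot\}\) with those of \(\mathscr{F}\) from \cite{GazorMokhtariInt}. You merely supply the details the paper leaves to the reader — notably the closure of \(\mathcal{P}\) under product and bracket (including the boundary case \(l=l'=-1\), where the coefficient vanishes) and the identity \(X_{\{f,g\}}=[X_f,X_g]\), whose structure-constant consequence \((k-l+1)(l'+1)-(l+1)(k'-l'+1)=(l'+1)(k+2)-(l+1)(k'+2)\) indeed matches the bracket on \(\mathscr{F}\).
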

\bpr
One can  readily check that  \(({\mathcal{P}}, \{\cdot, \cdot\})\) is a Poisson  algebra 
(see  \cite[Section 5]{FMTHESIS} and 
\cite[Chapter one ]{fernandes2014lectures}). 
A direct calculation using the Hamiltonian function given by the  first item and  Poisson bracket given by  \eqref{poissonb}
establish \eqref{HamF} (see  also preface of  \cite{fernandes2014lectures}).
The last  claim follows 
from the  straightforward  calculation  of structure constants given by 
\eqref{poissonb} and \cite[Lemma 2.5]{GazorMokhtariInt}.
\epr   
\begin{cor}\label{Hamiltonian}
The class  of Hopf-zero singular system 
given by \eqref{Eq1} may be represented  through appropriate  coordinates transformations 
to the form
\begin{equation}
\label{Hamiltoniane}
\left\{
\begin{aligned}
\frac{{\rm d}{x}}{{\rm d }t}&
= \frac{\partial H(x,r)}{\partial r}=\{H(x,r),x\},\\
\frac{{\rm d}{r}}{{\rm d }t}&=-\frac{\partial H(x,r)}{\partial x}=\{H(x,r),r\},
\\
\frac{{\rm d}{\theta}}{{\rm d }t}&=G(x,r),
\end{aligned}
\right.
\end{equation}
	where \(
	H(x,r):= \sum a^l_k x^{l+1} {r}^{{k-l+1}}
	\)    is a constant
	of the motion  and  \(G(x,r):=1+\sum  b^m_n{x}^{{m}} r^{n-m}.\)
		\end{cor}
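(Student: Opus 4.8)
The plan is to realize the ``appropriate coordinate transformation'' of the statement as the single substitution \(r=\rho^2\) already used to pass from \eqref{cylindFT} to \eqref{fcylr}, and then to read off the Hamiltonian form by invoking item~2 of the preceding theorem together with linearity. Writing the vector field of \eqref{Eq1} as \(v=\sum a^l_k F^l_k+\sum b^m_n \Theta^m_n\), the \(\mathscr{F}\)-part carries all of the \(x\)- and \(\rho\)-dynamics while the \(\mathscr{T}\)-part contributes only to the \(\theta\)-equation, since by \eqref{cylindTT} each \(\Theta^m_n=x^m\rho^{2(n-m)}\ddt\); so the two blocks can be normalized separately.

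First I would substitute \(r=\rho^2\), so that \(\frac{{\rm d}r}{{\rm d}t}=2\rho\frac{{\rm d}\rho}{{\rm d}t}\). The factor \(2\) here is exactly what cancels the \(\tfrac{1}{2}\) appearing in the \(\rho\)-equation of \eqref{Eq1}, and this cancellation is the one piece of bookkeeping that must be carried out carefully. Setting \(H(x,r):=\sum a^l_k f^l_k=\sum a^l_k x^{l+1} r^{k-l+1}\), a direct differentiation gives
\be
\frac{\partial H}{\partial r}=\sum a^l_k (k-l+1) x^{l+1} r^{k-l}, \qquad -\frac{\partial H}{\partial x}=-\sum a^l_k (l+1) x^{l} r^{k-l+1}.
\ee
Comparing these with the transformed first two equations of \eqref{Eq1} shows that \(\frac{{\rm d}x}{{\rm d}t}=\frac{\partial H}{\partial r}\) and \(\frac{{\rm d}r}{{\rm d}t}=-\frac{\partial H}{\partial x}\), which is precisely \eqref{Hamiltoniane}. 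Conceptually this is nothing more than item~2 of the theorem applied term by term: each \(F^l_k=\{f^l_k,x\}\ddx+\{f^l_k,r\}\frac{\partial}{\partial r}\), and summing against the \(a^l_k\) with \(H=\sum a^l_k f^l_k\) produces the stated equalities \(\frac{\partial H}{\partial r}=\{H,x\}\) and \(-\frac{\partial H}{\partial x}=\{H,r\}\) directly from the definition \eqref{poissonb} of the bracket.

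For the third equation I would simply substitute \(r=\rho^2\) into the \(\theta\)-component of \eqref{Eq1}, which turns \(\sum b^m_n x^m \rho^{2(n-m)}\) into \(\sum b^m_n x^m r^{n-m}\); absorbing the leading rotation rate of the Hopf singularity into the constant term then yields \(G(x,r)=1+\sum b^m_n x^m r^{n-m}\). Finally, that \(H\) is a constant of the motion follows at once from the antisymmetry of the bracket: along solutions, \(\frac{{\rm d}H}{{\rm d}t}=\frac{\partial H}{\partial x}\frac{{\rm d}x}{{\rm d}t}+\frac{\partial H}{\partial r}\frac{{\rm d}r}{{\rm d}t}=\{H,H\}=0\). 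I expect no genuine obstacle beyond the coefficient bookkeeping in the first step, since the substance of the corollary is already carried by the Poisson structure and the Hamilton's-equations identity established in the theorem.
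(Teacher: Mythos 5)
Your proposal is correct and follows essentially the first of the two verifications the paper itself gives: substitute \(r=\rho^2\) so the factor \(2\rho\) cancels the \(\tfrac12\) in the \(\rho\)-equation, set \(H=\sum a^l_k f^l_k\), and read off Hamilton's equations from item~2 of the preceding theorem, with the bracket \eqref{poissonb} giving \(\{H,x\}=\partial H/\partial r\) and \(\{H,r\}=-\partial H/\partial x\). The paper additionally sketches a second route via the Jacobian construction of Mezi\'c and Wiggins (replacing \(J\) by \(2J\) and taking \(r=\int 2J\,d\rho=\rho^2\)), but that is an alternative to, not a missing ingredient of, the argument you give.
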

	\bpr
	The corollary can be verified in two ways:
	\( (1)\)
	recast the dynamical system \eqref{CNF} by  employing   the coordinate
	change \(r=\rho^2.\) 
		Then,
 taking  Hamilton's equations corresponding to \(F^l_k\)  given by Equation \eqref{HamF}
into account,
 immediately verify the statement of the corollary. Then 
\((2)\)
 follows from  \cite[Remark 2.4]{GazorMokhtariInt} and 
  the 
  procedure   given in the proof of \cite[Theorem 2.2]{mezic1994integrability}
  with the slight modification. In fact,   replace \(J\) by \( 2J,\) into that proof.
   First by employing  the  change of variables  \(y= \rho\cos(\theta), z=\rho \sin(\theta),\)  and \(x=x,\)
		the   differential equation \eqref{Eq1}  goes over into
	 the system
	 \begin{equation*}
	 \left\{
	 \begin{aligned}
	 \frac{{\rm d}{x}}{{\rm d }t}&= \frac{\partial K(x,\rho)}{2J\partial \rho},
	 \\
	 	\frac{{\rm d}{\rho}}{{\rm d }t}&=-\frac{\partial K(x,\rho)}{2J\partial x},
	 	\\
	 		\frac{{\rm d}{\theta}}{{\rm d }t}&=G(x,\rho),
	 	 \end{aligned}
	 \right.
		 \end{equation*}
	 where \(J\!=\rho\)   is the Jacobian of the  cylindrical transformation,
	 \(K(x,\rho):=\sum a^l_k x^{l+1} {\rho}^{2({k-l+1})},\) and  \(G(x,\rho):=1+\sum  b^m_n{x}^{{m}}\rho^{2(n-m)}.\)
	 To proceed further,   carrying  out the   transformation \(r=\int 2J=\rho^2\) into the  former system.
	  This 
	 turns  the above system into
	   the form   given  by \eqref{Hamiltoniane}, see also \cite[Example 3]{mezic1994integrability}.
 The corollary follows.
 		\epr
\subsection{Euler's form for \(\mathscr{L}\)}\label{EulerForm}
Any solenoidal vector field may be represented   by   Euler's form  as
\ba\label{euler}
v=h(g_1,g_2)\nabla g_1\times \nabla g_2,
\ea
where  \(h,g_1,\) and \(g_2\) are scalar valued functions. We remark that \(g_1\) and \(g_2\) may not be defined everywhere on the domain of \(v\). The functions 
\(g_1\) and \(g_2\) are 
 first integrals or vector sheets  of vector field \(v\)
and  named  Euler's potentials.
The geometrical meaning of Euler's form  is that 
the vector lines  of solenoidal vector fields are the intersection of level surfaces of Euler's potentials, see \cite[page 22]{truesdell1954kinematics},  
Euler's Theorem
\cite[page 48]{udriste2012geometric}, and \cite{longcope2005topological}.

In the following result, we  express the vector fields in \(\mathscr{L}\) by local Euler's potentials.
\begin{thm}\label{Eulerform}
	The Euler's form of \({ F}^l_k\) and \(\Theta^l_k\) are given as follows
	\ba\label{ClH}
	{F}^l_k&=& \nabla
	\left( \frac{1}{2}\arctan ( {\frac {z}{y}} )\right)\times \nabla \left(x^{l+1} (y^2+z^2)^{k-l+1}\right),\qquad \hbox{for}\quad  y\neq 0,
	\\\label{ClT}
	{	{\Theta}^l_k}&=&
\nabla  \left(x^{l+1}\right)\times \nabla \left({\frac { -\left( {y}^{2}+{z}^{2} \right) ^{k-l+1}}{2 \left(l+1
		\right)\left( k-l+1 \right) }}\right) .
	\ea
\end{thm}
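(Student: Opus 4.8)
The plan is to prove both identities by a direct componentwise verification, organized around the observation that in each case the two functions appearing on the right are \emph{functionally independent first integrals} of the field, which is exactly the geometric content of Euler's form \eqref{euler}. First I would pass from the operator form of the fields to their Cartesian components. Reading the coefficients of $\ddx,\ddy,\ddz$ off \eqref{FT} and \eqref{TTerm} gives
\[
F^l_k=\bigl((k-l+1)x^{l+1}(y^2+z^2)^{k-l},\ -\tfrac{l+1}{2}x^l y(y^2+z^2)^{k-l},\ -\tfrac{l+1}{2}x^l z(y^2+z^2)^{k-l}\bigr),
\]
\[
\Theta^l_k=\bigl(0,\ x^l z(y^2+z^2)^{k-l},\ -x^l y(y^2+z^2)^{k-l}\bigr).
\]
These are the targets the two cross products in \eqref{ClH} and \eqref{ClT} must reproduce.

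The conceptual reason the stated potentials are the correct ones can be extracted from the cylindrical description \eqref{cylindFT}--\eqref{cylindTT}. Since $F^l_k=x^{l}\rho^{2(k-l)}\bigl((k-l+1)x\ddx-\tfrac{l+1}{2}\rho\ddr\bigr)$ has no $\ddt$ component, the angle $\theta=\arctan(z/y)$ is a first integral of $F^l_k$; together with the generator $x^{l+1}(y^2+z^2)^{k-l+1}$ of its first-integral algebra (recalled before \eqref{cylindFT}) this furnishes the two independent conserved functions entering \eqref{ClH}. Dually, $\Theta^l_k=x^{l}\rho^{2(k-l)}\ddt$ is purely angular, so both $x$ and $y^2+z^2$ are conserved, which is why a power of $x$ and a function of $y^2+z^2$ serve as its Euler potentials in \eqref{ClT}. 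Because $\nabla g_i$ is orthogonal to $\nabla g_1\times\nabla g_2$, any such pair automatically yields $v=h(g_1,g_2)\,\nabla g_1\times\nabla g_2$ for some scalar $h$, so the whole problem reduces to pinning down the normalizing constants that force $h\equiv 1$.

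The verification itself is then routine: I would compute $\nabla g_1$ and $\nabla g_2$ in each case—using $\partial_y\arctan(z/y)=-z/(y^2+z^2)$ and $\partial_z\arctan(z/y)=y/(y^2+z^2)$ for $F^l_k$—form the cross products, and simplify. The single non-formal cancellation is that the factor $1/(y^2+z^2)$ produced by differentiating the arctangent meets the factor $(y^2+z^2)^{k-l+1}$ in $g_2$ and collapses to the required power $(y^2+z^2)^{k-l}$; the constants $\tfrac12$ in \eqref{ClH} and $\tfrac{-1}{2(l+1)(k-l+1)}$ in \eqref{ClT} are exactly the factors that make the resulting magnitude match (rather than merely the direction). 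The main obstacle I anticipate is the sign and orientation bookkeeping in (i): since the first integrals fix $v$ only up to the scalar $h$, one must match a single component and then confirm that the right-handed cross-product orientation is consistent throughout, as an overall sign (equivalently, swapping the two potentials) is precisely the degree of freedom that must be checked against the components displayed above. A secondary point is (ii) the domain restriction $y\neq 0$ in \eqref{ClH}: the potential $\tfrac12\arctan(z/y)$ (i.e.\ $\theta$) is only locally single-valued, so \eqref{ClH} is a \emph{local} Euler representation valid off the half-plane $y=0$, whereas \eqref{ClT} uses globally defined polynomial potentials and therefore holds everywhere.
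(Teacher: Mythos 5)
Your proposal follows essentially the same route as the paper's own proof: invoke the solenoidal property (Euler's Theorem) to guarantee a representation $v=h(g_1,g_2)\,\nabla g_1\times\nabla g_2$, take the known first integrals as the Euler potentials, compute the two gradients explicitly, and verify by forming the cross product that the normalization forces $h\equiv 1$; the paper likewise treats $F^l_k$ in detail and dispatches $\Theta^l_k$ as analogous. Your added attention to the sign/orientation bookkeeping and to the local validity of $\arctan(z/y)$ for $y\neq 0$ is sensible extra care but does not change the method.
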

\bpr
We prove this theorem  only for \({F}^l_k.\)
The proof 
 for \(\Theta^l_k\)
 can be done analogously.
Due to the solenoidal property  of vector fields  in  \(\mathscr{L}\) and Euler's Theorem
\cite[page 48]{udriste2012geometric},
  \(F^l_k\) may be written  as \eqref{euler}.
Setting the   first integrals of  \(F^l_k\) 
as local Euler's potentials, in fact,  \(g_1:=x^{l+1}(y^2+z^2)^{k-l+1}\) and \(g_2:=\frac{1}{2}\arctan ( {\frac {z}{y}}).\)
Now,  we show  that \(h(f_1,f_2)\)  given  by \eqref{euler} equals one. By straightforward calculation, one has
\bas
\nabla\left(\frac{1}{2}\arctan ( {\frac {z}{y}} )\right)&=&{\frac {-z}{2({y}^{2}+{z}^{2})}}\cdot\mathbf{ {e}_{y}}+{\frac {y}{2({y}^{2}+{z}^{2})}}\cdot\mathbf{ {e}_{z}},
\\
\nabla\left(x^{l+1} (y^2+z^2)^{k-l+1}\right)&=&
x^{l}(y^2+z^2)^{k-l+1}\Big((l+1)(y^2+z^2) \cdot\mathbf{ {e}_{x}}+2(k-l+1)x y\cdot\mathbf{ {e}_{y}}
\\&&
+2(k-l+1)xz \cdot\mathbf{ {e}_{z}}\Big).
\eas
By taking the cross  product of  the 
  above vector fields and  considering \eqref{FT} one finds
\bas
 \nabla
\left(\frac{1}{2}\arctan ( {\frac {z}{y}})\right)\times\nabla \left(x^{l+1} (y^2+z^2)^{k-l+1}\right)= 
F^l_k.
\eas
This  implies that  \(h(f_1,f_2)=1\) and concludes our assertion.
\epr
\subsection{Vector potential for \(\mathscr{L}\)}
Another   way that  solenoidal vector field   could be  constructed,  is by mean  of the  vector potential.
These vector fields are expressible   from another vector field by taking its curl.
To find  more  information about  the  physical significant of  vector potential, see \cite{konopinski1978electromagnetic,phatak2010three,tonomura1987applications}.
\begin{thm}\label{vectpot}
	The vector  fields  \({F}^l_k\) and \(\Theta^l_k\) could be  expressed in the following form.
	\bas
	{F}^l_k&=&\nabla\times \left( \frac{1}{2}x^{l+1} (y^2+z^2)^{k-l}({ {-z}}\cdot\mathbf{ {e}_{y}}+{ {y}}\cdot\mathbf{ {e}_{z}})\right),
	\\
	\Theta^l_k&=&\nabla\times \left({\frac {-{x}^{l} \left( {y}^{2}+{z}^{2} \right) ^{k-l+1}}{2(k-l+1)}} \cdot\mathbf{ {e}_{x}}\right).
	\eas
\end{thm}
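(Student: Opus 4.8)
The divergence-free (solenoidal) property of every element of $\mathscr{L}$, already exploited to obtain the Euler representation in Theorem \ref{Eulerform}, is exactly what guarantees the \emph{existence} of a vector potential: by the Poincar\'e lemma, on the star-shaped domain of definition any $v$ with $\nabla\cdot v=0$ can be written as $v=\nabla\times\mathbf{A}$. The content of the statement is therefore not existence but the explicit, globally polynomial choices of potential exhibited above (note that, unlike the Euler potentials, these are defined on all of $\mathbb{R}^3$, since $k-l\geq 0$ and $l+1\geq 0$ throughout). The natural plan is to verify each formula by a direct computation of the curl and a term-by-term comparison with \eqref{FT} and \eqref{TTerm}; as the claim is an identity of smooth vector fields, it suffices to check the three Cartesian components.

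For $F^l_k$ I would set $\mathbf{A}=A_y\,\mathbf{e}_y+A_z\,\mathbf{e}_z$ with $A_x=0$, $A_y=-\tfrac12 x^{l+1}(y^2+z^2)^{k-l}z$ and $A_z=\tfrac12 x^{l+1}(y^2+z^2)^{k-l}y$, and compute $\nabla\times\mathbf{A}$ componentwise. The $\mathbf{e}_y$ and $\mathbf{e}_z$ components are immediate: only the $x$-derivative survives ($-\partial_x A_z$ and $\partial_x A_y$ respectively), producing the factor $-\tfrac{l+1}{2}$ and reproducing the $\ddy$- and $\ddz$-coefficients of \eqref{FT}. The one step requiring care is the $\mathbf{e}_x$ component $\partial_y A_z-\partial_z A_y$: the product rule generates two chain-rule contributions proportional to $y^2(y^2+z^2)^{k-l-1}$ and $z^2(y^2+z^2)^{k-l-1}$, which recombine into $(y^2+z^2)^{k-l}$ and, together with the two undifferentiated terms, yield precisely the factor $(k-l)+1=k-l+1$ appearing in the $x\ddx$-coefficient of \eqref{FT}. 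This recombination is the only substantive point of the computation and is where I expect the main (purely bookkeeping) obstacle to lie.

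For $\Theta^l_k$ the potential is purely axial, $\mathbf{B}=B_x\,\mathbf{e}_x$, so $(\nabla\times\mathbf{B})_x=0$ automatically, matching the absence of a $\ddx$-component in $\Theta^l_k$; the remaining components are $\partial_z B_x\,\mathbf{e}_y-\partial_y B_x\,\mathbf{e}_z$. Here the normalization $\tfrac{1}{2(k-l+1)}$ is chosen exactly so that differentiating $(y^2+z^2)^{k-l+1}$ discharges the chain-rule factor $(k-l+1)$ and leaves the clean coefficient $x^l(y^2+z^2)^{k-l}$ of \eqref{TTerm}. The delicate item is the relative sign between $\partial_z B_x$ and $\partial_y B_x$, which must be tracked carefully so that the resulting field matches the orientation $z\ddy-y\ddz$ rather than its negative.

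Finally, a conceptual cross-check links this theorem to the previous one. The identity $\nabla\times(f\,\nabla g)=\nabla f\times\nabla g$ (valid since $\nabla\times\nabla g=0$) shows that any Euler representation $v=\nabla g_1\times\nabla g_2$ is automatically a curl, $v=\nabla\times(g_1\,\nabla g_2)$. Applying this to the potentials of Theorem \ref{Eulerform} produces a vector potential containing the $\arctan$ factor, which can differ from the polynomial $\mathbf{A}$ and $\mathbf{B}$ above only by a gauge term $\nabla\chi$. Verifying that the two candidate potentials differ by a gradient is an independent consistency check, but the cleanest route to the stated formulas remains the direct curl computation sketched here, in which the entire difficulty is the coefficient-and-sign bookkeeping.
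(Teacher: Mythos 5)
Your proposal is correct as a proof, but it does not follow the paper's route. The paper's argument is a one\-/line reduction: it invokes the identity $\nabla\times(\alpha\nabla\beta)=\nabla\alpha\times\nabla\beta$ (this is Equation 5 of \cite{longcope2005topological}) applied to the Euler potentials already produced in Theorem \ref{Eulerform}. Writing the cross product with the gradient falling on the $\arctan$ factor, e.g.\ $F^l_k=\nabla\times\bigl(\pm x^{l+1}(y^2+z^2)^{k-l+1}\,\nabla(\tfrac12\arctan(z/y))\bigr)$, the $(y^2+z^2)$ in the denominator of $\nabla\arctan(z/y)$ cancels against the polynomial and one lands \emph{directly} on the stated polynomial potential; so, contrary to the last paragraph of your proposal, the $\arctan$ does not survive into the potential and no gauge term $\nabla\chi$ is needed --- the paper's derivation and the stated formula agree on the nose (up to an overall sign, see below). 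What your approach buys is self\-/containedness and a genuinely global verification: the direct curl computation is valid on all of $\mathbb{R}^3$, whereas the Euler potentials of Theorem \ref{Eulerform} are only defined for $y\neq0$, so the paper's two\-/line proof silently extends a local identity between polynomial vector fields to the whole space by continuity. What the paper's approach buys is brevity and an explicit logical link to the Euler representation. One caveat that your sign\-/vigilance correctly anticipates: with the Cartesian convention \eqref{TTerm}, namely $\Theta^l_k=x^l(y^2+z^2)^{k-l}\left(z\ddy-y\ddz\right)$, the stated potential ${\bf B}$ actually yields $-\Theta^l_k$; the formula is consistent instead with the cylindrical convention \eqref{cylindTT}, for which $\frac{\partial}{\partial\theta}=-z\ddy+y\ddz$. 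This sign ambiguity is internal to the paper's conventions rather than a defect of your argument, but a complete write\-/up of your direct computation would have to resolve it explicitly.
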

\bpr
The proof follows from
\cite[Equation 5]{longcope2005topological} and Euler's form given by Theorem \ref{Eulerform}.
\epr
The vector potential is not unique since the curl of a gradient is zero.
In the following theorem  we use 
an alternative approach
to construct a  vector potential associated  to the  solenoidal vector fields \(F^l_k\) and \(\Theta^l_k.\)
\begin{thm}
	The vector fields \(F^l_k\) and \(\Theta^l_k\) can be  produced  in  the following form as
	\ba\label{VPF}
	 F^l_k&=&\nabla\times {\bf A}^l_k,
	\\\label{VPT}
	\Theta^l_k&=& \nabla\times {\bf B}^l_k,
	\ea
	where the vector potentials  \({\bf A}^l_k\) and \({\bf B}^l_k\) are defined as follows
		\bas
	{\bf A}^l_k&:=&	 \sum_{j=0}^{k-l} {\frac {{y}^{2j}
			{z}^{2(k-l-j)+1} {x}^{l}{k-l\choose j}}{2(k-l-j)+1}}\left(\frac{l+1}{2}y \cdot\mathbf{ {e}_{x}}+\left( k-l+1 \right) {x} \cdot\mathbf{ {e}_{y}}
	\right),
	\\
	{\bf B}^l_k&:=&\frac{{x}^{l}}{{2(k-l+1)}}\left({ { ( {y}^{2}+{z}^{2} ) ^{k-l+1}}}
	-{{{y}^{2(k-l+1)}}}\right)\cdot\mathbf{ {e}_{x}}-{\frac {{x}^{l+1}{y}^{2k-2l+1}}{l+1}}\cdot\mathbf{ {e}_{y}}.
	\eas
\end{thm}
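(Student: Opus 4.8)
The plan is to verify both identities \eqref{VPF} and \eqref{VPT} by a direct evaluation of the curl in Cartesian coordinates, once the fields are written in component form and once the single scalar identity that carries all the content has been isolated. Reading off \eqref{FT} and \eqref{TTerm}, the field $F^l_k$ has components $\big((k-l+1)x^{l+1}(y^2+z^2)^{k-l},\, -\tfrac{l+1}{2}x^l y(y^2+z^2)^{k-l},\, -\tfrac{l+1}{2}x^l z(y^2+z^2)^{k-l}\big)$, while $\Theta^l_k$ has components $\big(0,\, x^l z(y^2+z^2)^{k-l},\, -x^l y(y^2+z^2)^{k-l}\big)$. Since both ${\bf A}^l_k$ and ${\bf B}^l_k$ have vanishing $\mathbf{e}_z$-part, and since the $\mathbf{e}_y$-part of ${\bf B}^l_k$ is independent of $z$, each curl collapses to only a few partial derivatives, so the proof reduces to differentiating the scalar prefactors.

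The one nontrivial ingredient is a binomial identity for the finite sum appearing in ${\bf A}^l_k$. Writing $N:=k-l$ and $S:=\sum_{j=0}^{N}\tfrac{\binom{N}{j}}{2(N-j)+1}\,y^{2j}z^{2(N-j)+1}$, so that ${\bf A}^l_k=x^l S\big(\tfrac{l+1}{2}y\,\mathbf{e}_x+(k-l+1)x\,\mathbf{e}_y\big)$, I would first establish
\[
\frac{\partial S}{\partial z}=\sum_{j=0}^{N}\binom{N}{j}y^{2j}z^{2(N-j)}=(y^2+z^2)^{N},
\]
which is just the binomial theorem and is precisely what makes the (otherwise non-elementary) $z$-antiderivative of $(y^2+z^2)^N$ expressible as the finite sum $S$. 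A companion identity, needed for the last curl component, is
\[
(2N+1)\,S-y\,\frac{\partial S}{\partial y}=\sum_{j=0}^{N}\binom{N}{j}y^{2j}z^{2(N-j)+1}=z\,(y^2+z^2)^{N},
\]
which follows from the elementary cancellation $2N+1-2j=2(N-j)+1$ inside each summand.

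For $\Theta^l_k$ the computation is shorter and I would do it first. The $\mathbf{e}_x$-component of $\nabla\times{\bf B}^l_k$ vanishes because the $\mathbf{e}_y$-part of ${\bf B}^l_k$ does not depend on $z$; the $\mathbf{e}_y$-component arises from $\partial_z$ of the $\mathbf{e}_x$-part and reproduces $x^l z(y^2+z^2)^{k-l}$ once the factor $2(k-l+1)$ cancels; and the $\mathbf{e}_z$-component produces the two monomial families $y(y^2+z^2)^{k-l}$ and $y^{2(k-l)+1}$, the latter cancelling between $\partial_x B_y$ and $\partial_y B_x$ (since $2k-2l+1=2(k-l)+1$) to leave exactly $-x^l y(y^2+z^2)^{k-l}$. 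For ${\bf A}^l_k$ I would then substitute the two identities above: $\partial_z S=(y^2+z^2)^N$ settles the $\mathbf{e}_x$- and $\mathbf{e}_y$-components directly, while the $\mathbf{e}_z$-component assembles into $\tfrac{l+1}{2}x^l\big((2N+1)S-y\,\partial_y S\big)$, which the companion identity collapses to $\tfrac{l+1}{2}x^l z(y^2+z^2)^N$.

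The main obstacle is purely bookkeeping in the ${\bf A}^l_k$ case: its $\mathbf{e}_z$-component of the curl mixes $\partial_x A_y$ and $\partial_y A_x$, and the latter brings in $\partial_y S$, so one cannot expect termwise cancellation but must recognise that the whole combination reorganises through $2N+1-2j=2(N-j)+1$. I would therefore record the two scalar identities as a short lemma before touching the curl, so that the three component checks become one-line substitutions; the only remaining care is in tracking the constants $(k-l+1)$ and $\tfrac{l+1}{2}$ and the orientation of $\nabla\times$ so that the signs in \eqref{VPF}--\eqref{VPT} come out as stated.
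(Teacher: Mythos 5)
Your strategy---verify the displayed potentials by computing the curl directly, with the binomial identity $\partial_z S=(y^2+z^2)^{N}$ and the companion identity $(2N+1)S-y\,\partial_y S=z(y^2+z^2)^{N}$ doing the work---is sound, and it is essentially the paper's own computation run in the opposite direction: the paper \emph{constructs} ${\bf A}^l_k$ by antidifferentiating the component equations of $F^l_k=\nabla\times{\bf A}^l_k$ in $z$ (termwise, after expanding $(y^2+z^2)^{k-l}$ by the binomial theorem, which is exactly how your sum $S$ arises), and then fixes the integration functions $p_1,p_2$ to zero by the third component equation. Your treatment of the $\Theta^l_k$ part is complete and correct, including the cancellation of the $y^{2(k-l)+1}$ monomials in the $\mathbf{e}_z$-component.

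There is, however, a real problem with your closing claim that ``the signs in \eqref{VPF}--\eqref{VPT} come out as stated'' for the $F$-family: they do not, and your own intermediate expressions show it. You correctly obtain $(\nabla\times{\bf A}^l_k)_z=+\tfrac{l+1}{2}x^{l}z(y^2+z^2)^{k-l}$, whereas from \eqref{FT} the $\mathbf{e}_z$-component of $F^l_k$ is $-\tfrac{l+1}{2}x^{l}z(y^2+z^2)^{k-l}$; the same overall sign flip occurs in the other two components, e.g.\ $(\nabla\times{\bf A}^l_k)_x=-\partial_z\bigl((k-l+1)x^{l+1}S\bigr)=-(k-l+1)x^{l+1}(y^2+z^2)^{k-l}$ against $dx(F^l_k)=+(k-l+1)x^{l+1}(y^2+z^2)^{k-l}$. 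So the ${\bf A}^l_k$ displayed in the theorem satisfies $\nabla\times{\bf A}^l_k=-F^l_k$. This is consistent with the paper's own derivation, whose formulas \eqref{EE2} and \eqref{EE1} carry an overall minus sign on both components that is dropped in the theorem's display. A correct write-up of your verification must therefore either negate ${\bf A}^l_k$ (matching the paper's proof) or record the identity as $F^l_k=-\nabla\times{\bf A}^l_k$; deferring the issue to ``care with the orientation of $\nabla\times$'' does not resolve it, since the standard orientation is already the one that makes the ${\bf B}^l_k$ identity and the paper's other vector-potential theorem come out right.
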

\bpr
 To prove the theorem,  
 we follow the approach  given in 
 \cite{MichaelSullivan}. 
 By solving   the equality \(F^l_k=\nabla\times {{\bf A}^l_k},\)  we obtain the following equalities
 \ba\label{EEE1}
 \ddz{{\bf A}^l_k}	\cdot\mathbf{ {e}_{x}}= dy(F^l_k),\qquad
-\ddz{{\bf A}^l_k}	\cdot\mathbf{ {e}_{y}}= dx(F^l_k),\qquad
\ddx{{\bf A}^l_k}	\cdot\mathbf{ {e}_{y}}-\ddy{{\bf A}^l_k}	\cdot\mathbf{ {e}_{x}}=
dz(F^l_k).
\ea
The first identity of \eqref{EEE1} implies that 
\ba\label{EE2}
{{\bf A}^l_k}\cdot\mathbf{ {e}_{x}}&=& -\frac{l+1}{2}x^{l} \sum_{j=0}^{k-l} \frac{{k-l\choose j}{y}^{2j+1}
	{z}^{2(k-l-j)+1}}{2(k-l-j)+1}+p_1(x,y),
\ea
analogously  the second identity of \eqref{EEE1}  implies that 
\ba\label{EE1}
{{\bf A}^l_k}\cdot\mathbf{ {e}_{y}}=- (k-l+1)x^{l+1} \sum_{j=0}^{k-l}
\frac{{k-l\choose j}{y}^{2j}
	{z}^{2(k-l-j)+1}}{2(k-l-j)+1}+p_2(x,y),
\ea
where \(p_1(x,y)\) and \(p_2(x,y)\) are   functions.
Now by substituting  \eqref{EE2} and \eqref{EE1} into the last  identity of \eqref{EEE1} one can find, 
\(\ddz p_1(x,y)-\ddx p_2(x,y)=0.\)
Thus, without loss of generality we may assume that  \( p_1(x,y)= p_2(x,y)=0,\) the 
equality \eqref{VPF} concludes.
By repeating this procedure 
for  \(\varTheta^l_k\) we can verify  equality \eqref{VPT}.
\epr
\subsection{Clebsch representation for \(\mathscr{L}\)}
In this part,  we shall study  the following problems: namely, we shall  study 
the  non-potential property of Hopf-zero singularities from \(\mathscr{L},\) and we shall present  an alternative representation that is  the Clebsch representation  of  any vector fields from \(\mathscr{L}.\)
Based  on this   representation
for a given  vector field \(v\in \mathbb{R}^3,\) 
for any  point \(x\in \mathbb{R}^3\)
   in which 
\(\nabla\times v\neq 0,\)
  one can find three 
scalar valued functions \(f_1,f_2,\) and \(f_3\) such that 
\ba\label{monge}
v= f_1\nabla f_2+\nabla f_3.
\ea
These scalar valued  functions are called Clebsch potentials of \(v.\) Noting  that \(f_1,f_2\) and \(f_2\) may not be defined everywhere on the domain of \(v.\)
Moreover,
this representation shows that   any vector fields may be presented  by summation of  the   lamellar field  and  complex lamellar field,
see 
Definition \ref{Geom}.
For those interested in knowing  that  how this representation  is constructed, we include the following discussion from  the proof of  Clebsch's Theorem  of  \cite[Section 2.4]{udriste2012geometric}.

Since  \(\nabla\times v\) is a solenoidal vector field then,  there exist  local Euler's potentials \(f_1\) and \(f_2\)
such that \(\nabla\times v=\nabla f_1\times\nabla f_2.\)
Making use of  the  vector calculus identity  
we have  
 \(\nabla\times (v-f_1\nabla f_2)=0.\) Hence,
 \(v-f_1\nabla f_2\) is 
irrotational vector field,  then there exists potential function \(f_3\) such that 
\(v-f_1\nabla f_2=\nabla f_3,\) see Definition \ref{Geom}. Then,  the relation \eqref{monge} follows.

For  the Euler's  form  (and consequently  the  representation given by vector potential)  to be possible,
the vector field \(v\) has to be solenoidal. 
We remark that
deriving  the vector field
\(v\) from Clebsch  potentials \(f_1,f_2,\) and \(f_3\) through the relation 
\eqref{monge}, is not related to the solenoidal  property  of vector field \(v.\)
In fact,
any vector field may be represented  
in the Clebsch  representation, see  \cite[Section 2.4]{udriste2012geometric}.

\begin{thm}
Any   \(v\in \mathscr{L}\) with  Hopf-zero singularity is the non-potential vector field.
\end{thm}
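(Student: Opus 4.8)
The plan is to verify the claim through Definition~\ref{Geom}: a field is non-potential as soon as $\nabla\times v$ fails to vanish at a single point, so it suffices to exhibit one such point. I would take that point to be the origin and exploit that $\nabla\times$ is linear, together with the homogeneity of the generators $F^l_k$ and $\Theta^l_k$.

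First I would isolate the linear part of $v$. In Cartesian coordinates each generator in \eqref{FT}--\eqref{TTerm} is homogeneous of degree $2k-l+1$, since the scalar factor $x^l(y^2+z^2)^{k-l}$ has degree $2k-l$ and multiplies a linear vector field. Requiring degree one forces $2k=l$, hence $k=l=0$ in both families. The candidate $F^0_0=x\partial_x-\tfrac12 y\partial_y-\tfrac12 z\partial_z$ is removed by the standing hypothesis $a^0_0=0$, so the only degree-one generator surviving in $\mathscr{L}$ is the rotation $\Theta^0_0=z\partial_y-y\partial_z$. Because $v$ carries a genuine Hopf-zero singularity, its linearization must possess the conjugate pair of purely imaginary eigenvalues $\pm i\omega$ with $\omega\neq0$; in these coordinates this is precisely the term $\omega\,\Theta^0_0$ (the angular part $\omega\,\partial_\theta$ of the flow, normalized to $\omega=1$ in \eqref{CNF}), so the coefficient of $\Theta^0_0$ in $v$ is nonzero.

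Next I would compute the curl of this leading term directly, obtaining $\nabla\times\Theta^0_0=\nabla\times(z\partial_y-y\partial_z)=-2\,\partial_x$, a nonzero constant field. Every remaining generator is homogeneous of degree at least two, so its curl is a homogeneous polynomial vector field of degree at least one and therefore vanishes at the origin. By linearity of the curl, evaluation at $x=0$ annihilates all higher-order contributions and leaves $(\nabla\times v)(0)=-2\omega\,\partial_x\neq0$, whence $\nabla\times v\not\equiv0$ and $v$ is non-potential.

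The argument is largely bookkeeping, and the one step I would treat as the crux is the identification of the linear part: one must confirm that $a^0_0=0$ discards the sole curl-free linear generator $F^0_0$ and that the Hopf-zero hypothesis forces the rotation coefficient $\omega$ to be nonzero, so that the surviving linear contribution really is the rotation whose curl is nontrivial. Should one prefer to avoid this reliance on the precise linearization, an equivalent route is to expand $\nabla\times v$ in full and display its lowest-degree nonzero homogeneous component; for instance $\nabla\times F^{-1}_0=4(z\partial_y-y\partial_z)\neq0$ already shows that even the leading $\mathscr{F}$-generator is rotational, but evaluating the curl at the origin is the most economical way to conclude.
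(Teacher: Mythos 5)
Your proposal is correct, but it reaches the conclusion by a genuinely different and more economical route than the paper. The paper's proof grades $v=\sum_j v_j$ by $\delta(F^l_k)=\delta(\Theta^l_k)=k$, observes that $\nabla\times v=0$ forces $\nabla\times v_j=0$ for every grade $j$ (distinct grades share no monomials), computes the curl of a general graded component, and concludes that each $v_j$ must vanish except possibly a multiple of $F^0_0=x\ddx-\tfrac12\rho\ddr$, which is then excluded by the Hopf-zero hypothesis. You instead evaluate $(\nabla\times v)(0)$: since every generator other than $F^0_0$ and $\Theta^0_0$ is a homogeneous polynomial field of degree at least two, its curl vanishes at the origin, while $a^0_0=0$ together with the eigenvalue structure $0,\pm i\omega$, $\omega\neq 0$, forces the linear part to be $\omega\,\Theta^0_0$ with $\nabla\times\Theta^0_0=-2\,\partial_x\neq 0$. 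Your identification of the degree-one generators is right (degree $2k-l+1=1$ forces $k=l=0$ in both families, and $a F^0_0+b\,\Theta^0_0$ has eigenvalues $a,\ -a/2\pm ib$, so Hopf-zero indeed means $a=0$, $b=\omega\neq0$), and the single curl computation you need is correct. What you give up relative to the paper is the stronger byproduct of its grade-by-grade argument, namely a characterization of \emph{all} potential elements of $\mathscr{L}$ (only the scalar multiples of $F^0_0$); what you gain is a shorter proof that sidesteps the full curl formula for an arbitrary graded component, whose verification is the bulk of the paper's argument.
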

\bpr
Recall  from \cite{GazorMokhtariInt}, the grading function for generators  of \(\mathscr{L}\)
given  by \(\delta({\rm F}^l_k)=\delta({\Theta}^l_k)=k.\) 
Suppose that   the non-zero Hopf-zero vector field \(v=\sum_{j=0}^{\infty}v_j\in \mathscr{L}\)  where \(\delta(v_j)=j\) is given.
Assume  that  the  claim of the  theorem  does not hold.
Similar to argument given in \cite{GazorMokhtariSandersBtriple},  
  since  the vector fields with    different grades  do not  have any  monomial in common, it implies that  
\(\nabla \times v=0\) if and only if \(\nabla \times v_j=0,\) for all \(j\in \mathbb{N}_0.\)
  Thus  without loss of generality, it suffices to show that \(\nabla\times v_j= 0.\)   
  Taking into account the defined grading function, the vector field \(v_j\) may be represented by 
   \[v_j:=\sum_{i=-1}^{\lfloor\frac{j}{2}\rfloor} a_i {F}^i_{j}+\sum_{i=0}^{\lfloor\frac{j}{2}\rfloor}b_i\Theta^i_j,\]
   where \(a_i\) and \(b_i\) are real constants for all \(i.\)
 Applying  the curl operator to \(v_j\)  yields
\bas
\nabla\times v_j&=&
\sum_{i=-1}^{\lfloor\frac{j}{2}\rfloor}
(0
,0, a_i\frac{i({i}+1)}{2}{x}^{{i}-1}{\rho}^{2(j-{i})+1}+ 2a_i(j-i)(j-i+1) {x}^{{i}+1}{\rho}^{2(j-{i})-1}
)
\\&&
-\sum_{i=0}^{\lfloor\frac{j}{2}\rfloor}\left(
i b_i {x}^{{i}-1}{\rho}^{2(j-{i})}
,0,
2(j-i+1)b_i {x}^{{i}}{\rho}^{2(j-{i})-1}\right).
\eas
 Hence, the only way that the foregoing relation vanishes 
   is either \(v_j=0\)  or \(v_0=  x\ddx- \frac{1}{2} \rho\ddr.\) This implies that either \(v=0\) or 
the vector field is not    Hopf-zero singularity,
which are  in contradiction 
 to   our assumption.
   Hence, the claim is proved.
\epr
In the terminology
of  Definition \ref{Geom}, this result shows  that  the solenoidal family of Hopf-zero singularities are not lamellar  vector fields.
In this  sense, these vector fields can not be expressed by the  gradient of the  scalar valued  function. 
In what follows,  the Clebsch representation of \(\mathscr{L}\)  is given.
\begin{thm}\label{Clebschtheorem}
	\begin{itemize}
		The following hold.
		\item 
			For given \(F^l_k\in \mathscr{F}\) there exist Clebsch potentials  \(f_1,f_2\), and \(f_3\) 
			 such that
			\ba\label{ClebschF}
			{ F}^l_k=f_1\nabla  f_2+\nabla f_3,
			\ea
			where  for \(l\neq 0,\) 
						\ba\label{f1f2f3}
						\nonumber
						 f_1&:=&x^{l},
						 \\
						f_2&:=& -\frac{\left( l+1
							\right)}{4({k-l+1})}{ { \left( {y}^{2}+{z}^{2} \right) ^{k-l+1}  }}-\frac{\left( k-l+1 \right)}{l}{ {  {x}^{2} \left( {y}^{2
								}+{z}^{2} \right) ^{k-l}}},
									\\\nonumber
		f_3&:=&\frac{1}{l} { { \left( {y}^{2}+{z}^{2} \right) ^{k-l} \left( k-l+1 \right) {x
				}^{l+2}}}.
			\ea
Otherwise
\bas
f_1:=-x,\quad f_2:=(k+1)x (y^2+z^2)^{k},\quad f_3:=-(k+1)x^2(y^2+z^2)^{k}+\frac{1}{4(k+1)}(y^2+z^2)^{k+1}.
\eas
						
		\item
		For each vector field \(	{ \Theta}^l_k\in \mathscr{T}\) there exist local  Clebsch potential \(g_1\)  and   global  Clebsch potentials \(g_2,g_3\)  as follows
			\bas
		g_1:={\frac {z}{y}},\,\,\,\qquad g_2:={y}^{2}{x}^{l} \left( {y}^{2}+{z}^{2} \right) ^{k-l},
		\qquad g_3:=-yz {x}^{l}\left( {y}^{2}+{z}^{2} \right)^{k-l},\qquad \hbox{for}\quad y\neq 0,
		\eas
					such that
	\[
		{ \Theta}^l_k= g_1\nabla g_2+\nabla g_3.
	\]
	\end{itemize}
\end{thm}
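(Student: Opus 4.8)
The plan is to follow the constructive recipe for the Clebsch representation recalled just before the statement and then confirm the resulting potentials by a direct componentwise check. Recall the recipe: for a vector field \(v\) one computes \(\nabla\times v\), which is automatically solenoidal; by Euler's Theorem one writes \(\nabla\times v=\nabla f_1\times\nabla f_2\) for suitable (possibly only local) potentials \(f_1,f_2\); the identity \(\nabla\times(f_1\nabla f_2)=\nabla f_1\times\nabla f_2\) then forces \(v-f_1\nabla f_2\) to be irrotational, so it equals \(\nabla f_3\) for some \(f_3\), and \(v=f_1\nabla f_2+\nabla f_3\). Existence of the Euler potentials is guaranteed by Euler's Theorem; the content of the theorem is to exhibit them explicitly.

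First, for \(F^l_k\) with \(l\neq 0\), I would compute \(\nabla\times F^l_k\) explicitly (this is the curl already appearing in the proof of the non-potential theorem, specialized to a single generator). The key modelling choice is \(f_1=x^l\), which is natural because every component of \(F^l_k\) carries the factor \(x^l\). Since \(f_1\) depends on \(x\) alone, \(\nabla f_1=(lx^{l-1},0,0)\), whence \(\nabla f_1\times\nabla f_2=lx^{l-1}\bigl(0,\,-\ddz f_2,\,\ddy f_2\bigr)\). Matching this to the two nonzero components of \(\nabla\times F^l_k\) gives two first-order equations for \(f_2\) that are solved by integration in \(y\) and \(z\); the common factor \(lx^{l-1}\) is exactly where the hypothesis \(l\neq 0\) enters. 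With \(f_1,f_2\) fixed, I would verify that \(F^l_k-f_1\nabla f_2\) has vanishing curl and integrate it to recover \(f_3\), matching the stated expressions (the divisions by \(l\) again reflecting \(l\neq 0\)).

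The case \(l=0\) must be treated separately precisely because \(f_1=x^0=1\) is then constant, so \(\nabla f_1=0\) and the construction degenerates; this is consistent with the non-potential theorem, which shows \(\nabla\times F^0_k\neq 0\), so a nonconstant complex-lamellar factor is unavoidable. I would instead take the degree-one factor \(f_1=-x\) and repeat the same two steps: solve \(\nabla f_1\times\nabla f_2=\nabla\times F^0_k\) for \(f_2\), then integrate \(F^0_k-f_1\nabla f_2=\nabla f_3\) to obtain the displayed \(f_2,f_3\). For \(\Theta^l_k\) the same machinery applies, the only new feature being the local potential \(g_1=z/y\), which is forced by the angular part \(z\ddy-y\ddz\) of \(\Theta^l_k\) and is defined only for \(y\neq 0\); the integration producing \(g_2,g_3\) then yields globally defined polynomials. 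I would set \(g_1=z/y\), solve for \(g_2\), check that \(\Theta^l_k-g_1\nabla g_2\) is irrotational, and integrate to get \(g_3\).

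The main obstacle is not any single calculation but the bookkeeping: selecting the correct complex-lamellar factor (\(f_1=x^l\), \(f_1=-x\), and \(g_1=z/y\)) in each regime, isolating the degenerate \(l=0\) case, and keeping track of which potentials are merely local (\(g_1\), and the \(\arctan\)-type potential implicit in the companion Euler form) versus global (\(f_3\), \(g_2\), \(g_3\)). Once the factors \(f_1,g_1\) are chosen, the remaining potentials are determined up to an additive constant by elementary integration, so the argument ultimately reduces to the componentwise identities \(f_1\nabla f_2+\nabla f_3=F^l_k\) and \(g_1\nabla g_2+\nabla g_3=\Theta^l_k\), which I would verify directly using \eqref{FT} and \eqref{TTerm}.
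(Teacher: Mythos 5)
Your proposal is correct and follows essentially the same route as the paper: compute \(\nabla\times F^l_k\), exhibit Euler potentials \(f_1,f_2\) for it (with \(f_1=x^l\) for \(l\neq0\), forcing the separate treatment of \(l=0\)), and then integrate the irrotational remainder \(F^l_k-f_1\nabla f_2\) componentwise to obtain \(f_3\), checking consistency between the two component equations; the \(\Theta^l_k\) case is handled analogously with the local potential \(g_1=z/y\). The only cosmetic difference is that you derive \(f_2\) by matching the curl components while the paper simply verifies the stated \(f_1,f_2\) satisfy \(\nabla\times F^l_k=\nabla f_1\times\nabla f_2\).
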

\bpr
Following  the  discussions at the beginning of this  part, 
first we need  to
 find the  Euler's form  for \(\nabla \times F^l_k.\)
Writing
\(\nabla \times{\rm F}^l_k,\) given by 
 the preceding theorem 
using   Cartesian coordinates  gives
\bas
\nabla \times F^l_k&=&\left( {y}^{2}+{z}^{2} \right) ^{k-l-1}{x}^{l-1} \left( \frac{l(l+1)}{2} 
\left( {y}^{2}+{z}^{2} \right) + 2
\left( k-l \right)  \left( k-l+1 \right){x}^{2}\right)(z\cdot\mathbf{ {e}_{y}}-y \cdot\mathbf{ {e}_{z}}).
\eas
Suppose that \(l\neq0,\) then by straightforward calculation
	 one  can  verify that \(\nabla \times F^l_k=\nabla f_1\times\nabla f_2,\) where \(f_1,f_2\)  are  given by Equation \eqref{f1f2f3}.
In order that \eqref{ClebschF} to be true the following equalities must hold
\ba\label{F1}
&&
dx(F^l_k)-f_1\nabla f_2\cdot\mathbf{ {e}_{x}}=\nabla f_3\cdot\mathbf{ {e}_{x}},
\\&&
\label{F2}
dy(F^l_k)-f_1\nabla f_2\cdot\mathbf{ {e}_{y}}=\nabla f_3\cdot\mathbf{ {e}_{y}}.
\ea
Equation \eqref{F1} follows  after some computations that 
\bas
 f_3={x}^{l}\left({y}^{2}+{z}^{2}\right)^{k-l}\left({\frac{ \left( k
		-l+1\right){x}^{2}}{l+1}}-{\frac{{y}^{2}+{z}^{2}}{4(k-l+1)}}
\right)+h(y,z).
\eas
Substituting 
  \(f_3\) in \eqref{F2}
 results in \(h(y,z)=0.\)  In  this way, we obtain the expression \eqref{ClebschF}. 
 If \(l=0\)  one can check that 
 \bas
 F^{0}_k&=&(y^2+z^2)^{k}\left((k+1)x\cdot\mathbf{{e}_{x}}-\frac{1}{2}y\cdot\mathbf{{e}_{y}}-\frac{1}{2}z\cdot\mathbf{{e}_{z}}\right)
 \\
 &=&
 -x\nabla\left((k+1)x (y^2+z^2)^{k}\right)
 +
 \nabla\left(-(k+1)x^2(y^2+z^2)^{k}+\frac{1}{4(k+1)}(y^2+z^2)^{k+1}\right).
 \eas 
 The proof  for  \(\Theta^l_k\) is analogous to the proof of \({F}^l_k.\)
 \epr

\section{Solenoidal conservative  \(\z\)-equivariant  Lie algebra}\label{HZFO}
In this section, we shall introduce the  solenoidal  conservative Lie algebra associated to the   classical normal form of Hopf-pitchfork singularities \eqref{CNF}.
We also recall  the  theory of unique normal form  at the end  of this part.

To start, we recall the following definition from \cite[Chapter XII ]{golubitsky2012singularities}.
\begin{defn}
	Let
	\(
\dot{x}=v
	\) with \(x\in \mathbb{R}^n,\)
	be an autonomous dynamical system where
  \(v,\) is smooth. Let \({	\Gamma},\) be
	a compact Lie group in \({\rm Gl}(n)\). 
	This system   is called \({	\Gamma}\)-equivariant if
	\(
	v(\gamma x)=\gamma v
	\)
	for all \(\gamma\in{	\Gamma}\) and  \(x\in \mathbb{R}^n.\)
	\end{defn}
Define
 	 	\ba
	\label{Ht}
	{\rm H}^{{l}}_k&:=& (k-{2l}+1){x}^{{2l}+1}{\rho}^{2(k-{2l})}\ddx- \frac{({2l}+1)}{2} {x}^{2l}{\rho}^{2(k-{2l})+1}\ddr  ,
	\\	\label{Rt}
	{\varTheta}^{l}_k&:=&{x}^{{2l}}\rho^{2(k-{2l})}\ddt,
	\ea
	 where
	 \(0\leqslant 2l\leqslant k.\) These vector fields are invariant  under the linear map 
	 \ba\label{zzl}
	 \mathbb{R}^3 \rightarrow   \mathbb{R}^3 : \qquad (x,y,z) \mapsto (-x,-y,-z).
	 \ea
However,	the foregoing  vector fields are \(\z\)-equivariant version of those   that are   given by  
	equations
	 \eqref{cylindFT} and \eqref{cylindTT},
	  in order to avoid  any 
 confusion, we would like to change  the notations. 
Thereby, instead of \(F\)-terms and  \(\Theta\)-terms we shall write \({\rm H}\)-terms and \(\varTheta\)-terms.

	 Denote
	\be\label{Lie}
		{\mathscr{F}}^{\mathbb{Z}_2}
	:=\Big\langle \sum a^l_{k} {\rm H}^l_k \;|\; a^l_{k}\in \mathbb{R}, {0\leqslant 2l\leqslant k,  1\leqslant k} \Big\rangle  \ee
	and
	\be\label{Lie}
	{\mathscr{T}}^{\z} :=\Big \langle \sum b^l_{k} {\varTheta}^l_k \;|\; b^l_{k} \in \mathbb{R}, {0\leqslant2l\leqslant k, 0\leqslant k} \Big\rangle. \ee
	Then we define the maximal Lie  algebra of  solenoidal conservative  \(\z\)-equivariant classical normal form of Hopf-zero vector fields  by 
\(
	{\mathscr{L}}^{\z}:= 	{\mathscr{F}}^{\z}  \oplus 	{\mathscr{T}}^{\z}.
	\)
Since
	\(	{\mathscr{L}}^{\z}\) 
is  a Lie subalgebra of \(\mathscr{L},\)  then
	\(	{\mathscr{L}}^{\z}\)   inherits
 the geometrical properties such as      conservation, incompressibility, and rotationality, from \(\mathscr{L}.\)   
	See also \cite[Theorem 2.4]{GazorMokhtariInt}.

Our next task is to present  the structure constants for \(	{\mathscr{L}}^{\z}.\)
\begin{lem} The following relations always hold. 
	\bas\label{aa} {[{\rm H}^{{l}}_k,{\rm H}^{{m}}_n]}&=&
	\left((2m+1)(k+2)-(2l+1)(n+2)\right) {\rm H}^{l+m}_{k+n},
	\\
	{[{\rm H}^{{l}}_k,{\varTheta}^{{m}}_n]}&=&\left(2m(k+2)-n(2l+1)\right){\varTheta}^{l+m}_{ k+n},
	\\
	{[{\varTheta}^{{l}}_k,{\varTheta}^{{m}}_n]}&=&0.
	\eas
\end{lem}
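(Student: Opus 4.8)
The plan is to verify all three bracket relations by direct computation of the Lie bracket of vector fields, exploiting the fact that the $\mathrm{H}$-terms and $\varTheta$-terms have an especially simple structure in cylindrical coordinates $(x,\rho,\theta)$. From \eqref{Ht} and \eqref{Rt}, each generator is a monomial coefficient times a fixed differential operator: writing $\mathrm{H}^l_k = x^{2l}\rho^{2(k-2l)}\,\mathcal{E}_k^l$ where $\mathcal{E}_k^l = (k-2l+1)x\,\ddx - \tfrac{2l+1}{2}\rho\,\ddr$, and $\varTheta^l_k = x^{2l}\rho^{2(k-2l)}\,\ddt$. The key observation is that since none of the coefficient monomials depend on $\theta$ and the operators $\ddx,\ddr,\ddt$ commute, the computation reduces to tracking how each vector field differentiates the monomial coefficient of the other.

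First I would record the two elementary actions that drive everything: the Euler-type operator $\mathcal{E}_k^l$ applied to a monomial $x^{2m}\rho^{2(n-2m)}$ scales it by $2m(k-2l+1) - (2n-2m)\tfrac{2l+1}{2} = 2m(k-2l+1)-(n-2m)(2l+1)$, and the $\theta$-derivative annihilates every $\theta$-independent monomial. For the first relation $[\mathrm{H}^l_k,\mathrm{H}^m_n]$, I would expand $\mathrm{H}^l_k(\mathrm{H}^m_n) - \mathrm{H}^m_n(\mathrm{H}^l_k)$. The vector part $\mathcal{E}$ is common structure: the bracket of two vector fields of the form $f\mathcal{E}$ and $g\mathcal{E}'$ where $\mathcal{E},\mathcal{E}'$ are linear Euler operators again lands in the same family, and after collecting the scalar factors from each operator acting on the other's coefficient, the resulting prefactor should simplify to $(2m+1)(k+2)-(2l+1)(n+2)$. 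The third relation $[\varTheta^l_k,\varTheta^m_n]=0$ is immediate because both fields point purely in the $\ddt$ direction with $\theta$-independent coefficients, so each annihilates the other's coefficient and the two cross terms cancel identically.

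The main obstacle — really the only place where care is needed — is the bookkeeping in reducing the scalar prefactors to the stated closed forms. After the chain-rule expansion one obtains a difference of two products of the scaling factors, and showing that this collapses exactly to $(2m+1)(k+2)-(2l+1)(n+2)$ for the $\mathrm{H}$-$\mathrm{H}$ bracket and to $2m(k+2)-n(2l+1)$ for the $\mathrm{H}$-$\varTheta$ bracket requires expanding and cancelling the cross terms in the exponents. The cleanest route is to note that these are precisely the $\z$-equivariant specializations of the structure constants for $\mathscr{L}$ already established in \cite[Lemma 2.5]{GazorMokhtariInt}: since $\mathrm{H}^l_k$ and $\varTheta^l_k$ are the even-degree generators obtained from $F^{2l}_k$ and $\Theta^{2l}_k$ under the substitution $l\mapsto 2l$ in \eqref{cylindFT}–\eqref{cylindTT}, I would simply substitute $l\mapsto 2l$, $m\mapsto 2m$ into the known structure constants of $\mathscr{L}$ and read off the three formulas, verifying the index shifts are consistent. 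This both shortens the argument and confirms that $\mathscr{L}^{\z}$ is genuinely closed as a Lie subalgebra of $\mathscr{L}$.
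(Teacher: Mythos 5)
Your proposal is correct and, in its final form, is essentially the paper's own proof: the paper simply observes that the relations follow from the structure constants of $\mathscr{L}$ in \cite[Lemma 2.5]{GazorMokhtariInt}, which is exactly your substitution $l\mapsto 2l$, $m\mapsto 2m$ using ${\rm H}^l_k=F^{2l}_k$ and $\varTheta^l_k=\Theta^{2l}_k$. Your additional direct-computation sketch via the Euler-operator eigenvalues is also sound (modulo the harmless slip $(2n-2m)\tfrac{2l+1}{2}$ in place of $2(n-2m)\tfrac{2l+1}{2}$, which your final expression already corrects).
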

\begin{proof}
	The proof follows   from 
	\cite[Lemma 2.5]{GazorMokhtariInt}.
\end{proof}
The following specific cases of preceding result   will be useful 
in  the rest of the  paper.
\ba\label{STS1}
{[{\rm H}^0_1,{\rm H}^m_n]}&=&\left( 6m-n+1 \right) {\rm H}^m_{{n+1}},
\\\label{STS2}
{[{\rm H}^0_1,{\varTheta}^m_n]}&=&\left( 6m-n \right) {\varTheta}^m_{{n+1}}.
\ea

The remainder of this section is devoted to 
revisiting  a concise but detailed outline  of
the theory and method concerning   finding the  unique normal form for \({	\Gamma}\)-equivariant singularities.

\begin{itemize}
	\item  Denote \({\rm  ad}(u) {v}\)  for \({\rm  ad}(u) {v}=uv-vu,\) where \(u\) and \(v\)  are  two arbitrary vector fields. 
	\item Denote \(	{\mathscr{L}}^{\Gamma}\) for the space of all \({	\Gamma}\)-equivariant vector fields.
	\item  Denote \({{{\mathscr{L}}_k}}^{	\Gamma}\) for the space of \({	\Gamma}\)-equivariant  graded Lie algebra   with grade \(k.\)
	\({	\Gamma}\)-equivariant
	graded Lie algebra means  that   	
	\({\rm  ad}(v_i)v_j\in {{{\mathscr{L}}_{i+j}}}^{	\Gamma},\) for arbitrary \(v_i \in{{{\mathscr{L}}_i}}^{	\Gamma}\)
	and  \(v_j \in {{{\mathscr{L}}_j}}^{\Gamma}.\)
\end{itemize}
Consider the differential equation 
\(
v_s= \sum^\infty_{i=0}v_{i},
\)
where \(v_s\in 	{\mathscr{L}}^{\Gamma}.\) 
First, we define a linear map to find the first level normal form  as follows 
\bas
&	 d^{k,1}_s: 	{{{\mathscr{L}}_k}}^{	\Gamma}\rightarrow 	{{{\mathscr{L}}_k}}^{	\Gamma},&
\\
&d^{k,1}_s(Y_k):={\rm  ad}(Y_n) { v_0}.&
\eas
The  first level normal form is given by \(v^{(1)}= \sum^\infty_{i=0}v^{(1)}_{i},\) where \(v^{(1)}_i\in   \mathcal{C}^{k,1} \) 
for all \(i\)
and  \( \mathcal{C}^{k,1}\) is the complement space to \(\IM (d^{k,1}_s).\)
Note that in  order to preserve the symmetric structure  of \(v_s\), the transformations should be taken  from \(	{{{\mathscr{L}}_k}}^{	\Gamma}.\)
Proceeding inductively, we define
\bas
&d^{k, n}_s: {{\mathscr{L}}_k}^{	\Gamma}\times \ker d^{k,n-1}_s\rightarrow {{\mathscr{L}}_k}^{	\Gamma},&
\\
&d^{k,n}_s(Y^{k}_s, Y^{k-1}_s, \ldots , Y^{k-n+1}_s):= \sum^{n-1}_{i=0} {\rm  ad}( Y^{k-i}_s)v_i, \quad \hbox{ for any } n\leqslant k.&
\eas 
Then, there exists the complement subspace \(\mathcal{C}^{k,n}\) such that
\(
\IM (d^{k,n}_s)\oplus \mathcal{C}^{k,n}= {{\mathscr{L}}_k}^{	\Gamma},
\)
where \(\mathcal{C}^{k,n}\) follows the normal form style.
Then   the  \(n\)-level  normal form of \(v_s\)
is given by 
\(
w_s= \sum^\infty_{i=0}w_i,
\)
where 
\(w_i\in \mathcal{C}^{k,n}\) for all \(i.\)
For the
more detailed treatment of this theory, we refer to 
\cite{GazorYuSpec, SandersSpect,MurdBook} and 
\cite[sections 9-13]{
	sanders2007averaging}. 

\section{Normal form}\label{NFC}
In this section, we examine the unique normal form of  dynamical system \eqref{CNF}. We closely  follow  the approach in 
\cite{GazorMokhtari,BaidSand91,GazorMokhtariInt}.
 In terms of \(\rm H\)-terms and \(\varTheta\)-terms  as introduced in \eqref{Ht} 
 and \eqref{Rt}, this system  leads to the following expression
\be\label{CNFRH}
w^{(1)}:= \varTheta^0_0+b^0_1 {\varTheta}^0_1+ a^0_1 {\rm H}^0_1 +\sum a^l_k {\rm H}^l_k+\sum b^l_k {\varTheta}^l_k,\quad \hbox{for}\,\,
0\leqslant 2l\leqslant k, 1\leqslant k,
\ee
where
\(a^0_1\neq0,\)
\(a^l_k,\) and \(b^l_k\) are real constants which  could be computed  explicitly in terms of the coefficients of the original system using our  Maple program. We shall depart from  the foregoing system.
\begin{lem}
	Given the dynamical system defined by  \eqref{CNFRH},
there exists a sequence of \(\z\)-equivariant  transformations  that send  the system   to the following second level normal form
\ba\label{SLN}
w^{(2)}&:=&{\varTheta}^0_0+b^{(2)}_0 {\varTheta}^0_1+a^{(2)}_1 {\rm H}^{0}_1+\sum_{i=1}^{\infty} a^{(2)}_i{\rm H}^i_{2i}+\sum_{i=1}^{\infty} b^{(2)}_i{\varTheta}^i_{2i},
\ea
in which all of the  coefficients   are real constants.
\end{lem}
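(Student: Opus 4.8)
The plan is to run the graded homological reduction of \eqref{CNFRH} with the leading cubic term \({\rm H}^0_1\) in the role of the principal element. The first thing I would record is that the linear part \(\varTheta^0_0\) commutes with every generator of \(\mathscr{L}^{\z}\): all the \({\rm H}\)- and \(\varTheta\)-terms are rotationally invariant, so by the structure-constant lemma \(\ad(Y)\varTheta^0_0=0\) for every \(Y\), and the first-level map \(d^{k,1}_s\) vanishes identically. Thus \(\mathscr{L}^{\z}\) is its own first-level normal form, and all simplification of \eqref{CNFRH} is produced at the second level through the map \(Y\mapsto \ad(Y)\,v_1\), where \(v_1=a^0_1{\rm H}^0_1+b^0_1\varTheta^0_1\) is the grade-one part and \(a^0_1\neq 0\). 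I would then set up an induction on the grade \(k=\delta(\cdot)\) and, at each grade, strip off every monomial lying in \(\IM\big(\ad(\cdot)\,v_1\big)\).

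The computation that drives everything is the action of \(\ad({\rm H}^0_1)\) and \(\ad(\varTheta^0_1)\) on the generators. Using \eqref{STS1}, \eqref{STS2}, together with the companion brackets \([\varTheta^0_1,{\rm H}^m_n]=(2m+1)\varTheta^m_{n+1}\) and \([\varTheta^0_1,\varTheta^m_n]=0\) read off from the structure-constant lemma, both maps fix the upper index \(l=m\) and raise the lower index by one. The reduction therefore decouples into independent towers \(\{{\rm H}^m_n,\varTheta^m_n\}_{n\geq 2m}\), and on each tower the map \(Y\mapsto\ad(Y)\,v_1\) sends grade \(n\) to grade \(n+1\) by the lower-triangular block
\[
\begin{pmatrix} a^0_1\,(n-6m-1) & 0\\[3pt] -\,b^0_1\,(2m+1) & a^0_1\,(n-6m)\end{pmatrix}
\]
in the ordered bases \(({\rm H}^m_n,\varTheta^m_n)\) and \(({\rm H}^m_{n+1},\varTheta^m_{n+1})\), whose determinant is \((a^0_1)^2(n-6m-1)(n-6m)\). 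The inductive step is then immediate away from degeneracies: this block is invertible, so the entire grade-\((n+1)\) slice of the tower is removable, except at the bottom of the tower, where \(n+1=2m\) and there is simply no preimage. The bottom of each tower contributes exactly the surviving diagonal pair \({\rm H}^m_{2m},\varTheta^m_{2m}\); for \(m=0\), where \({\rm H}^0_0\) is absent, it instead leaves \({\rm H}^0_1,\varTheta^0_1\) (together with the untouched linear term \(\varTheta^0_0\)). Collecting the survivors over all \(m\) reproduces precisely the list in \eqref{SLN}.

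The step I expect to be the real obstacle is the two resonances, namely \(n=6m\) and \(n=6m+1\), where the determinant above vanishes and the block drops to rank one. There a one-dimensional residue survives at grade \(6m+1\) and at grade \(6m+2\), represented by \(\varTheta^m_{6m+1}\) and \({\rm H}^m_{6m+2}\) (for \(m=0\) the first of these is the admissible \(\varTheta^0_1\), while the second, \({\rm H}^0_2\), is spurious), and neither can be reached by any grade-one generator, so \(v_1\) alone does not bring the system to \eqref{SLN}. To eliminate these residues I would exploit the diagonal terms that have already survived: their brackets, computed from the general structure-constant lemma, shift the upper index \(l\) as well as \(k\), and hence land in precisely the resonant slots. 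The delicate point is to verify that this secondary cleanup can be organized grade by grade without reintroducing any lower-grade monomial, so that the induction stays consistent, and that the coefficients it requires are generically nonzero; this bookkeeping is exactly what I would confirm with the symbolic computation promised in the final section.
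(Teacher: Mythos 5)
Your overall strategy is the same as the paper's: grade by the lower index (\(\delta({\rm H}^m_n)=n\), \(\delta(\varTheta^m_n)=n+1\)), observe that \(\varTheta^0_0\) is central so that all the work is done by the adjoint action of the cubic leading part, and read the removable terms off the structure constants \eqref{STS1}--\eqref{STS2}. Your \(2\times 2\) lower-triangular block on each tower, with determinant \((a^0_1)^2(n-6m-1)(n-6m)\), is correct, and so is your identification of the two resonance lines \(n=6m\) and \(n=6m+1\). The paper's proof is a single sentence asserting \(\mathcal{C}^{k,2}=\Span\{\varTheta^0_1,{\rm H}^m_{2m},\varTheta^m_{2m}\}\) without mentioning these resonances, so on this point your computation is the more careful one: the residues \(\varTheta^m_{6m+1}\) and \({\rm H}^m_{6m+2}\) genuinely lie outside \(\IM\bigl(\ad(\cdot)(a^0_1{\rm H}^0_1+b^0_1\varTheta^0_1)\bigr)\), and the asserted span is therefore not a complement of that image; already at grade \(2\) the only available generators \(\varTheta^0_0,{\rm H}^0_1\) contribute nothing, yet \({\rm H}^0_2\) is missing from the list in \eqref{SLN}.

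The gap in your proposal is the ``secondary cleanup.'' For \(m\geqslant 1\) the residues \(\varTheta^m_{6m+1}\) and \({\rm H}^m_{6m+2}\) can indeed be attacked by bracketing against the surviving diagonal terms, but that is exactly the mechanism the paper reserves for the higher levels (Lemma 4.3 and Theorem 4.4, built on \(\mathbb{H}_{\mathtt{r}}={\rm H}^0_1+a_{\mathtt{r}}{\rm H}^{\mathtt{r}}_{2\mathtt{r}}\)); folding it into this lemma would make the ``second level'' label vacuous, and you have not verified the nonvanishing of the required coefficients. Worse, for \({\rm H}^0_2\) the cleanup cannot work at any level: a bracket in \(\mathscr{L}^{\z}\) lands on an \({\rm H}\)-term with upper index \(0\) only if both arguments are \({\rm H}^0_{\cdot}\) with lower indices summing to \(2\), and the unique admissible candidate \([{\rm H}^0_1,{\rm H}^0_1]\) vanishes, so the coefficient of \({\rm H}^0_2\) (equivalently of \(x\rho^6\) in the first integral) is invariant under every transformation generated by \(\mathscr{F}^{\z}\oplus\mathscr{T}^{\z}\). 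Your proof as written therefore does not close, but the point at which it fails is precisely the point the paper's one-line proof skips: to finish you must either exhibit a transformation outside \(\mathscr{L}^{\z}\) that removes \({\rm H}^0_2\) while keeping the system in the class, or accept that \eqref{SLN} must carry the extra resonant terms.
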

\begin{proof}
	Let \(\delta\) be grading function defined by \(\delta({\rm H}^m_n):=n,\delta({\varTheta}^m_n):=n+1.\)
Using the structure  constants given by  \eqref{STS1} and \eqref{STS2} one can deduce that
 \(\mathcal{C}^{k,2}=\Span\{{\varTheta}^0_1,{\rm H}^{m}_{2m},{\varTheta}^{m}_{2m} \,|\,  m \geqslant 1\}\) for any
  \( k \geqslant 1\)
 and the lemma follows.
\end{proof}
Similar to \cite{GazorMokhtariInt},  one can remove 
\({\varTheta}^0_0\)  using  the linear  change of variables,  for further details, see \cite[Lemma 5.3.6]{MurdBook} and \cite{GazorMokhtari}.
In what follows, we intend  to simplify the second level normal form; \(w^{(2)}.\)
First, nonetheless, we need
some notational conventions.

Suppose that there exists a non-zero  \({\rm H}^{l}_{2l}\) in \(w^{(2)}\) for some \(l\) and
there exists a non-zero  \({\varTheta}^{k}_{2k}\) in \(w^{(2)}\) for some \(k.\)
Then,
define
\bas
\mathtt{r}:= \min \{l \,|\,a^{(2)}_l\neq0, l\geqslant 1\},
\qquad\quad
\mathtt{s}:= \min \{k \,|\,b^{(2)}_k\neq0,k\geqslant 1\}.
\eas
Define the following grading function 
\bas
\delta({\rm H}^m_n):=\mathtt{r}(n-2m)+m, \,\,\,\,\, \delta({\varTheta}^m_n):=\mathtt{r}(n-2m)+\mathtt{r}+m+1.
\eas
Let us denote   the leading 
order term of \eqref{SLN} by \(\mathbb{H}_{\mathtt{r}}.\) According to the above grading function, we have that 
\ba\label{Hr}
\mathbb{H}_{\mathtt{r}}:={\rm H}^0_1+a^{(2)}_\mathtt{r}{\rm H}^{\mathtt{r}}_{2\mathtt{r}}.
\ea
This  vector field plays a prominent role in the sequel.
By performing   re-scaling \(x\) 
  in \eqref{SLN} as
\bas
x\rightarrow \left|{\frac{a_{\mathtt{r}}}{a^{(2)}_{\mathtt{r}}}}\right|^{\frac{1}{2{\mathtt{r}}}}x,\eas
then  \(\mathbb{H}_{{\mathtt{r}}}\) could be replaced 
by
\ba\label{Hr1}
\mathbb{H}_{\mathtt{r}}:={\rm H}^0_1+a_{\mathtt{r}}{\rm H}^{\mathtt{r}}_{2\mathtt{r}},
\ea
which implies that 
 \(a_r\) could be taken  an arbitrary real constant. Henceforth, for  simplicity,  we set \(a_{\mathtt{r}}=1.\)

For further reduction of \eqref{SLN}, the following lemma is quite useful.
\begin{lem}\label{trans}
	For each \({\rm H}^m_{n}\in 	{\mathscr{F}}^{\z} \)  and \({\varTheta}^m_{n}\in 	{\mathscr{T}}^{\z}\)  there exist transformations \(\mathfrak{H}^m_n\) and \(\mathfrak{R}^m_n\), such that the following hold.
		\ba\label{TH}
	[\mathfrak{H}^m_n,\mathbb{H}_{{\mathtt{r}}}]+{\rm H}^m_{n}&=&
	\frac{ ( -1 ) ^{n} ( 4m{\mathtt{r}}-2n{\mathtt{r}}+4m-n+1 )^{n-2m}_{4
 		{\mathtt{r}}+1}}{(6m-n+2)^{n-2m}_{4
 		{\mathtt{r}}+1}}
{\rm H}^{n{\mathtt{r}}+m-2m{\mathtt{r}}}_{2(n{\mathtt{r}}+m-2m{\mathtt{r}})},
	\\\label{TT}
		{[\mathfrak{R}^m_n,\mathbb{H}_{{\mathtt{r}}}]+{\varTheta}^m_{n}}&=& 
	\frac{( -1 ) ^{n}( 4m{\mathtt{r}}-2n{\mathtt{r}}+4m-n+2{\mathtt{r}}+1 )^{n-2m-1}_{4
		{\mathtt{r}}+1}}{2(6m-n+1)^{n-2m-1}_{4
		{\mathtt{r}}+1}}
	{\varTheta}^{n{\mathtt{r}}+m-2m{\mathtt{r}}}_{2(n{\mathtt{r}}+m-2m{\mathtt{r}})}.
			\ea
						\end{lem}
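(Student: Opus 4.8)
The plan is to read both identities \eqref{TH} and \eqref{TT} as single normal-form reductions with respect to the homological operator \(Y\mapsto[Y,\mathbb{H}_{\mathtt{r}}]\), and to prove them by induction on the integer \(j:=n-2m\geqslant0\), which I will call the \emph{level} of \({\rm H}^m_n\) (resp.\ \({\varTheta}^m_n\)). The engine is the structure-constant lemma together with its special cases \eqref{STS1}--\eqref{STS2}: with \(\mathbb{H}_{\mathtt{r}}={\rm H}^0_1+{\rm H}^{\mathtt{r}}_{2\mathtt{r}}\), one checks directly that, for the grading \(\delta({\rm H}^m_n)=\mathtt{r}(n-2m)+m\), the operator \([\,\cdot\,,\mathbb{H}_{\mathtt{r}}]\) raises \(\delta\) by \(\mathtt{r}\) and carries a level-\(\ell\) generator to a combination of a level-\(\ell\) and a level-\((\ell+1)\) generator. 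The key fact that makes the induction possible is that the target index \(p:=\mathtt{r}(n-2m)+m\) appearing on the right of \eqref{TH}--\eqref{TT} is invariant under the substitution \((m,n)\mapsto(m+\mathtt{r},\,n-1+2\mathtt{r})\), which lowers the level by one.

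For the \({\rm H}\)-case, the single reduction step is to commute the admissible transformation \({\rm H}^m_{n-1}\in{\mathscr{F}}^{\z}\) (legitimate because \(j\geqslant1\) forces \(2m\leqslant n-1\)) with \(\mathbb{H}_{\mathtt{r}}\). Using the bracket relations this yields exactly two terms,
\[
[{\rm H}^m_{n-1},\mathbb{H}_{\mathtt{r}}]=(n-6m-2)\,{\rm H}^m_n+A\,{\rm H}^{m+\mathtt{r}}_{n-1+2\mathtt{r}},\qquad A=-(4m\mathtt{r}-2n\mathtt{r}+4m-n+1),
\]
the first of level \(j\) and the second of level \(j-1\). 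Choosing the scalar \((6m-n+2)^{-1}\) to cancel \({\rm H}^m_n\) reduces it, modulo the image of \([\,\cdot\,,\mathbb{H}_{\mathtt{r}}]\), to \(\tfrac{A}{\,6m-n+2\,}\,{\rm H}^{m+\mathtt{r}}_{n-1+2\mathtt{r}}\); one then invokes the induction hypothesis on this lower-level generator and takes \(\mathfrak{H}^m_n\) to be the corresponding linear combination of \({\rm H}^m_{n-1}\) and the inductively supplied transformation.

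It then remains to confirm that the accumulated constant is the one in \eqref{TH}. Writing \(u=4m\mathtt{r}-2n\mathtt{r}+4m-n+1\) and \(w=6m-n+2\), a short computation shows that both bases increase by the Pochhammer step under \((m,n)\mapsto(m+\mathtt{r},n-1+2\mathtt{r})\), namely \(u\mapsto u+4\mathtt{r}+1\) and \(w\mapsto w+4\mathtt{r}+1\). Combined with the elementary identity \((a)^{j}_{4\mathtt{r}+1}=a\,(a+4\mathtt{r}+1)^{\,j-1}_{4\mathtt{r}+1}\) and the sign bookkeeping \((-1)^{(n-1+2\mathtt{r})-n}=(-1)^{2\mathtt{r}-1}=-1\), this turns the claimed coefficient into an exact one-step recursion whose multiplier is the single-step factor \(\tfrac{A}{w}=-\tfrac{u}{w}\); since the base case \(j=0\) is trivial (\(\mathfrak{H}^m_{2m}=0\), coefficient \(1\)), the induction closes. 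The \({\varTheta}\)-case is entirely parallel, now using \eqref{STS2}; the one feature to watch is that the final step (level \(1\to0\)) produces the universal factor \(-\tfrac12\), independent of \(m\), which accounts both for the explicit \(2\) in the denominator of \eqref{TT} and for the shifted exponent \(n-2m-1\) there.

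I expect the main obstacle to be twofold. First, the coefficient bookkeeping: keeping the two-term brackets, the signs, and the Pochhammer shift mutually consistent so that the telescoped product collapses to the stated closed form. Second, and more essential, the whole reduction requires the level-raising coefficients \(6m-n+2\) (respectively \(6m-n+1\)) together with all their \((4\mathtt{r}+1)\)-shifted partners to be nonzero---precisely the condition that the denominator Pochhammer in \eqref{TH} (respectively \eqref{TT}) not vanish. I would flag this non-resonance hypothesis explicitly, since the resonant generators---for instance \({\rm H}^0_2\) and \({\varTheta}^0_1\), for which the leading factor is \(0\)---are genuine exceptions that cannot be removed by \(\mathbb{H}_{\mathtt{r}}\).
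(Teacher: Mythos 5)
Your argument is correct and, once unrolled, is essentially the paper's proof: iterating your one-step recursion \(\mathfrak{H}^m_n=\tfrac{1}{6m-n+2}{\rm H}^m_{n-1}-\tfrac{4m\mathtt{r}-2n\mathtt{r}+4m-n+1}{6m-n+2}\,\mathfrak{H}^{m+\mathtt{r}}_{n-1+2\mathtt{r}}\) produces term by term the explicit sums \(\sum_{j=0}^{n-2m-1}(\cdots)\,{\rm H}^{j\mathtt{r}+m}_{2j\mathtt{r}-j+n-1}\) (and likewise for \(\mathfrak{R}^m_n\)) that the paper simply writes down and asks the reader to substitute. Your bracket coefficients \((n-6m-2)\) and \(-(4m\mathtt{r}-2n\mathtt{r}+4m-n+1)\) agree with the structure constants, the shifts \(u\mapsto u+4\mathtt{r}+1\), \(w\mapsto w+4\mathtt{r}+1\) and the invariance of \(p=\mathtt{r}(n-2m)+m\) under \((m,n)\mapsto(m+\mathtt{r},n-1+2\mathtt{r})\) check out, and the terminal factor \(-\tfrac12\) in the \(\varTheta\)-case correctly accounts for the \(2\) and the exponent \(n-2m-1\) in \eqref{TT}. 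Two of your side remarks deserve emphasis. First, the non-resonance caveat is a genuine point the paper omits: for resonant generators such as \(\varTheta^0_1\) (where \(6m-n+1=0\)) the displayed \(\mathfrak{R}^m_n\) involves a division by zero, so the lemma as literally stated needs the hypothesis you flag; the paper gets away with it only because Theorem \ref{SIMNFT} feeds the lemma exclusively non-resonant generators. Second, your signs are the ones consistent with the paper's own structure constants: a direct check at level one (\(n=2m+1\)) forces the coefficient of \({\rm H}^m_{n-1}\) to be \(+1/(6m-n+2)\), whereas the paper's printed \(\mathfrak{H}^m_n\) carries \((-1)^{j+1}\) and hence the opposite sign (and its \(\mathfrak{R}^m_n\) an extra spurious factor); this points to a typo in the paper's displayed transformations, not to an error in your derivation.
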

\bpr
Define
\bas
\mathfrak{H}^m_n&:=&\sum _{j=0}^{n-2m-1}
	\frac{( -1 ) ^{j+1} ( 4m{\mathtt{r}}-2n{\mathtt{r}}+4m-n+1 )^{j}_{4
		{\mathtt{r}}+1}}{(6m-n+2)^{j+1}_{4
		{\mathtt{r}}+1}}  {\rm H}^{j
		{\mathtt{r}}+m}_{2j{\mathtt{r}}-j+n-1},
	\\
	\mathfrak{R}^m_n&:=& \sum _{j=0}^{n-2m-1}
	\frac{( -1 ) ^{j+1}( 4m{\mathtt{r}}-2n{\mathtt{r}}+4m-n+2{\mathtt{r}}+1 )^{j}_{4
		{\mathtt{r}}+1}}{2(6m-n+1)^{j+1}_{4
		{\mathtt{r}}+1}} {\varTheta}^{j
		{\mathtt{r}}+m}_{2j{\mathtt{r}}-j+n-1}.
	\eas
	Then, substitute  the previous transformations into the left hand side of \eqref{TH} and \eqref{TT}, respectively.
	The right hand side could be obtained readily.
\epr
The following theorem is the main result of this section.
\begin{thm} \label{SIMNFT}
The  unique  normal form  of system \eqref{CNF} under assumption \(a^0_1\neq0\)  is given by 
\ba\label{SIMNF}
w^{(\infty)}&:=&
{\varTheta}^0_0+ {\rm H}^{0}_1+{\rm H}^{{\mathtt{r}}}_{2{\mathtt{r}}}+b_0 {\varTheta}^0_1+\sum_{i={\mathtt{r}}+1}^{\infty} a_i{\rm H}^i_{2i}+\sum_{i={\mathtt{s}}}^{\infty} b_i{\varTheta}^i_{2i},
\ea
or equivalently  in  the cylinder  coordinates  \(w^{(\infty)}\) takes the form 
\begin{equation*}
\left\{
\begin{aligned}
\frac{{\rm d}{x}}{{\rm d }t}&=2x\rho^2+{x}^{{2\mathtt{r}}+1}+ \sum_{i={\mathtt{r}}+1}^{\infty} a_i {x}^{{2i}+1},
\\
\nonumber
\frac{{\rm d}{\rho}}{{\rm d }t}&=-\frac{1}{2}\rho^3- \frac{({2\mathtt{r}}+1)}{2} {x}^{2\mathtt{r}}{\rho}-\sum_{i={\mathtt{r}}+1}^{\infty} \frac{({2i}+1)}{2} a_i {x}^{2i}{\rho},
\\\nonumber
\frac{{\rm d}{\theta}}{{\rm d }t}&=1+b_0\rho^2+b_\mathtt{s}{x}^{{2\mathtt{s}}}+\sum_{i={\mathtt{s}}+1}^{\infty} b_i{x}^{{2i}},
\end{aligned}
\right.
\end{equation*}
where \(b_i=0\) for all \(i\equiv_{4{\mathtt{r}}+1} (\mathtt{s}+\mathtt{r})\) and \(\mathtt{s}\not\equiv_{4{\mathtt{r}}+1}0.\)
The corresponding first integral of the unique normal form  is given by 
\ba\label{intsnf}
s^{(\infty)}(x,\rho)&:=& x\rho^4+ {\rho}^{2}  x^{2\mathtt{r}+1}+\sum_{i=\mathtt{r}+1}^{\infty} a_i  {\rho}^{2}x^{2i+1}.
\ea
	\end{thm}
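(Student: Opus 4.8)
The plan is to obtain \eqref{SIMNF} as a complement to the image of the leading homological operator \(d(Y):=[Y,\mathbb{H}_{\mathtt{r}}]\) on \(\mathscr{L}^{\z}\), with respect to the grading \(\delta({\rm H}^m_n)=\mathtt{r}(n-2m)+m\) and \(\delta(\varTheta^m_n)=\mathtt{r}(n-2m)+\mathtt{r}+m+1\) fixed above, and then to verify that this reduction is already terminal. Starting from the second level normal form \eqref{SLN} with \(\mathbb{H}_{\mathtt{r}}\) normalized as in \eqref{Hr1}, I would run the standard inductive normalization grade by grade: the higher-grade terms that a near-identity transformation generates are deferred to later steps, so that on each grade only the leading bracket \(d\) is relevant. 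Everything then reduces to identifying \(\IM d\) and a complement at every grade, and Lemma \ref{trans} is the tool that makes this explicit.

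For the \({\rm H}\)-part, on a fixed grade \(k\) the generators \({\rm H}^m_n\) are indexed by the off-diagonal parameter \(\mathtt{d}:=n-2m\geqslant0\) through \(m=k-\mathtt{r}\mathtt{d}\); by \eqref{STS1} and the general structure constants \(d\) is bidiagonal, sending the grade-\((k-\mathtt{r})\) generator of index \(\mathtt{d}\) to a combination of the grade-\(k\) generators of indices \(\mathtt{d}\) and \(\mathtt{d}+1\). Relation \eqref{TH} rewrites each off-diagonal \({\rm H}^m_n\), modulo \(\IM d\), as a nonzero multiple of the diagonal \({\rm H}^i_{2i}\) with \(i=m+\mathtt{r}(n-2m)\); since the displayed Pochhammer ratio never vanishes on \({\rm H}\)-generators, the whole off-diagonal direction collapses and the \({\rm H}\)-cokernel on grade \(k\) is \(\langle{\rm H}^k_{2k}\rangle\). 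I would record here the one delicate point: at the resonant grades where a factor in the denominator of \eqref{TH} vanishes the bidiagonal map drops rank and the cokernel acquires an extra direction, but a direct inspection (for instance at \(k=2\mathtt{r}\), where \(d({\rm H}^0_1)=(4\mathtt{r}+1){\rm H}^{\mathtt{r}}_{2\mathtt{r}+1}\)) shows this extra direction is always off-diagonal, so the diagonal \({\rm H}^i_{2i}\) stays in the complement and no condition is imposed on the \(a_i\).

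The \(\varTheta\)-part is where the real work lies, and it is genuinely second-level: a \(\varTheta\)-valued term of grade \(g\) can be altered both by \(\varTheta\)-type transformations through \([\,\cdot\,,\mathbb{H}_{\mathtt{r}}]\) (grade shift \(\mathtt{r}\), governed by \eqref{STS2} and \eqref{TT}) and by \({\rm H}\)-type transformations through \([\,\cdot\,,\varTheta^{\mathtt{s}}_{2\mathtt{s}}]\) (grade shift \(\mathtt{r}+\mathtt{s}+1\), using the leading \(\varTheta\)-term). Relation \eqref{TT} collapses the off-diagonal \(\varTheta\)'s onto the diagonal \(\varTheta^i_{2i}\) exactly as before, but now the denominator \((6m-n+1)^{n-2m-1}_{4\mathtt{r}+1}\) together with the second channel produces genuine resonances, and the main obstacle is the arithmetic bookkeeping. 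I would solve the vanishing conditions over the integers and show that precisely when \(i\equiv_{4\mathtt{r}+1}(\mathtt{s}+\mathtt{r})\) the degeneration forces the diagonal \(\varTheta^i_{2i}\) into \(\IM d\), so that it is removable and \(b_i=0\); the hypothesis \(\mathtt{s}\not\equiv_{4\mathtt{r}+1}0\) is exactly what keeps the leading generator \(\varTheta^{\mathtt{s}}_{2\mathtt{s}}\) nonresonant, so that \(\mathtt{s}\) is well defined and \(\varTheta^{\mathtt{s}}_{2\mathtt{s}}\) persists. The element \(\varTheta^0_1\) survives separately as the unique lowest-\(\delta\) element of \(\mathscr{T}^{\z}\) outside \(\IM d\), contributing the term \(b_0\varTheta^0_1\).

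Finally I would confirm terminality: every surviving generator is diagonal (or \(\varTheta^0_1\)), and checking the brackets \([{\rm H}^i_{2i},\mathbb{H}_{\mathtt{r}}]\), \([\varTheta^i_{2i},\mathbb{H}_{\mathtt{r}}]\) and \([{\rm H}^i_{2i},\varTheta^{\mathtt{s}}_{2\mathtt{s}}]\) shows they land in \(\IM d\) or in the already chosen complement, so the \(\delta\)-style complement stabilizes and no higher level removes a further monomial. The equivalent cylindrical system follows by inserting \eqref{Ht}--\eqref{Rt}, and the first integral \eqref{intsnf} is then immediate from Corollary \ref{Hamiltonian}: in the variable \(r=\rho^2\) the \((x,r)\)-dynamics of \(w^{(\infty)}\) is Hamiltonian with \(H=x r^2+x^{2\mathtt{r}+1}r+\sum_{i>\mathtt{r}}a_i x^{2i+1}r\), which is exactly \(s^{(\infty)}\) after restoring \(r=\rho^2\) and is conserved because \(\dot x=\partial_r H\) and \(\dot r=-\partial_x H\).
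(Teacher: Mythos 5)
Your overall strategy --- normalize with respect to the new grading using Lemma \ref{trans}, collapse the off-diagonal generators onto the diagonal, and then use a ``second channel'' through the leading \(\varTheta\)-term to remove certain diagonal \(\varTheta^i_{2i}\) --- is the same as the paper's, and your treatment of the cylindrical form and of the first integral via the Hamiltonian structure is fine. But there is a genuine gap exactly where the content of the proof lives: you never identify \emph{which} transformations remain available beyond the first level. The paper's proof turns on the observation that \(\ker(\ad_{{\rm H}^0_1})\) is generated by \(\{{\rm H}^k_{6k+1},\varTheta^k_{6k}\}\), and then on three explicit corrected-bracket identities: \([{\rm H}^k_{6k+1},{\rm H}^{\mathtt r}_{2\mathtt r}]+[\mathfrak{H}^{\mathtt r+k}_{2\mathtt r+6k+1},\mathbb{H}_{\mathtt r}]=0\) (so no further \({\rm H}\)-term is removable and no condition falls on the \(a_i\)), \([\varTheta^k_{6k},{\rm H}^{\mathtt r}_{2\mathtt r}]+[\mathfrak{R}^{k+\mathtt s}_{6k+2\mathtt s},\mathbb{H}_{\mathtt r}]=0\) (so \(\varTheta^k_{6k}\) generates a symmetry of the normal form), and \([{\rm H}^k_{6k+1},\varTheta^{\mathtt s}_{2\mathtt s}]+[\mathfrak{R}^{\mathtt s+k}_{6k+2\mathtt s+1},\mathbb{H}_{\mathtt r}]=c_k\,\varTheta^{(4\mathtt r+1)k+\mathtt s+\mathtt r}_{2((4\mathtt r+1)k+\mathtt s+\mathtt r)}\) with \(c_k\neq0\). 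The congruence \(i\equiv_{4\mathtt r+1}(\mathtt s+\mathtt r)\) is not something one can ``solve over the integers'' from \eqref{TT} alone; it is produced by the index \((4\mathtt r+1)k+\mathtt s+\mathtt r\) in this last identity, i.e.\ by the specific form of the kernel elements \({\rm H}^k_{6k+1}\). Without naming that kernel you can only assert the answer, not derive it.

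Two smaller points. Your terminality check brackets the surviving normal-form terms against \(\mathbb{H}_{\mathtt r}\) and \(\varTheta^{\mathtt s}_{2\mathtt s}\); that is the wrong test, since \({\rm H}^i_{2i}\) is not an admissible higher-level transformation (it does not lie in \(\ker(\ad_{\mathbb{H}_{\mathtt r}})\): already \([{\rm H}^i_{2i},{\rm H}^0_1]=-(4i+1){\rm H}^i_{2i+1}\neq0\)). What must be checked is the action of the leftover kernel generators on the successive terms of the vector field, which is precisely what the three identities above accomplish. Finally, the hypothesis \(\mathtt s\not\equiv_{4\mathtt r+1}0\) is not about \(\varTheta^{\mathtt s}_{2\mathtt s}\) being ``well defined'' or persisting: its role is to guarantee that the Pochhammer factor \((2\mathtt s-2k(4\mathtt r+1))^{4k}_{4\mathtt r+1}\) appearing in \(c_k\) is nonzero for every \(k\), so that every \(\varTheta^m_{2m}\) with \(m\equiv_{4\mathtt r+1}(\mathtt s+\mathtt r)\) is actually removable.
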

\bpr
Due to special  structure constants given by equations \eqref{STS1} and \eqref{STS2} we  conclude that
   the generators of 
\(\ker(\ad_{{\rm H}^0_1})\) are \(\{{\rm H}^k_{6k+1},{\varTheta}^k_{6k}\}\) for all \(k\in \mathbb{N}.\)
Applying the method given at the end of Section \ref{HZFO} and using Lemma \ref{trans}, we have that 
\bas
{[{\rm H}^{k}_{6k+1},{\rm H}^{{\mathtt{r}}}_{2{\mathtt{r}}
}]+[\mathfrak{H}^{{\mathtt{r}}+k}_{2{\mathtt{r}}+6k+1},\mathbb{H}_{r}]}&=&
\frac{( 4{\mathtt{r}}+1) (-2k)^{4k+2}_{1}}{(4k+1)!}
{\rm H}^{4{\mathtt{r}} k+k+2{\mathtt{r}}}_{8{\mathtt{r}} k+2k+4{\mathtt{r}}}=0,
\\
{[{\rm H}^{k}_{6k+1},{\varTheta}^{{\mathtt{s}}}_{2{\mathtt{s}}}]+[\mathfrak{R}^{{\mathtt{s}}+k}_{6k+2{\mathtt{s}}+1},\mathbb{H}_{r}]}&=&
{\frac {-2{\mathtt{s}} \left( 2k+1 \right)(2{\mathtt{s}}-2k
		-8{\mathtt{r}}k)_{4{\mathtt{r}}+1}^{4k}}
	{(4{\mathtt{s}})^{4k}_{4{\mathtt{r}}+1}}}{\varTheta}^{4{\mathtt{r}}k+k+{\mathtt{s}}+{\mathtt{r}}}_{
	8{\mathtt{r}}k+2k+2{\mathtt{s}}+2{\mathtt{r}}}.
\eas
Thus, the above relations  imply  that  \({\rm H}^{k}_{6k+1}\) can not eliminate  any \({\rm H}\)-terms and 
\({\varTheta}^m_{2m}\in\IM (d^{m,{\mathtt{s}}+1}_s)\) for any  \(m\equiv_{4{\mathtt{r}}+1} ({\mathtt{r}}+{\mathtt{s}}),\)
where
\(\mathtt{s}\not\equiv_{4{\mathtt{r}}+1}0.\) 
Furthermore, one has 
\bas
{[\varTheta^{k}_{6k},{\rm H}^{{\mathtt{r}}}_{2{\mathtt{r}}}]}+{[{\mathfrak{R}}^{k+{\mathtt{s}}}_{6k+2{\mathtt{s}}
	},\mathbb{H}_{r}]} &=&
{\frac {-k \left( 4{\mathtt{r}}+1 \right) (1-2k)^{4k-1}_{1}}{({4k-1})!}
}{\varTheta}^{4{\mathtt{r}} k+k+2{\mathtt{r}}}_{8{\mathtt{r}} k+2k+4{\mathtt{r}}}=0,
\eas
which 
turns out that 
 \(\varTheta^{k}_{6k}\) generates a symmetry for the  unique  normal form of \eqref{CNF}.
This  completes the  proof.
\epr
We close this section by giving four  representations of \eqref{SIMNF} based on the  given discussions  in Section \ref{repres}.
\begin{thm}\label{4rep}
	The unique normal form  of  \eqref{CNF}  can be expressed in  the following representations.
\begin{itemize}
  	  	\item[1.] 
The unique normal form can be presented   in  the following  form 
\begin{equation*}
\left\{
\begin{aligned}
 \frac{{\rm d}{x}}{{\rm d }t}&= \frac{\partial H(x,r)}{\partial r},
\\
 \frac{{\rm d}{r}}{{\rm d }t}&=-\frac{\partial H(x,r)}{\partial x},
\\
  \frac{{\rm d}{\theta}}{{\rm d }t}&=1+b_0\rho^2+b_\mathtt{s}{x}^{{2\mathtt{s}}}+\sum_{i={\mathtt{s}}+1}^{\infty} b_i{x}^{{2i}},
\end{aligned}
\right.
\end{equation*}
    in which \(
  	H(x,r):= x r^2+ a_{\mathtt{r}} r x^{2\mathtt{r}+1}+\sum_{i=\mathtt{r}+1}^{\infty} a_i x^{2i+1} r
  	\) and \(r=\rho^2.\)
  		\item[2.]
  	Euler's form for \(y\neq0\)
  	\bas
  	w^{(\infty)}&=&
  	 \frac{1}{2} \nabla\left(x (y^2+z^2)^2\right)  \times \nabla
  	\left(\arctan ( {\frac {z}{y}} )\right)	-\frac{1}{2 }\nabla\left({ {  {y}^{2}+{z}^{2}}}\right) 
  	\times\nabla  x+\frac{b_0}{4} \nabla\left({ ( {y}^{2}+{z}^{2})^2}\right)
  	\times \nabla x
  	\\&&
  	+
  	\sum_{i={\mathtt{r}}}^{\infty} \frac{a_i}{2}
  	\nabla\left( (y^2+z^2) x^{2i+1}\right)\times\nabla
  	\left(\arctan ( {\frac {z}{y}} )\right)
  	-	\sum_{i={\mathtt{s}}}^{\infty}\frac{b_i}{2 (2i+1) }\nabla\left({ { ( {y}^{2}+{z}^{2})}}\right) 
  	\times\nabla \left(x^{2i+1}\right).
  	\eas
  	\item[3.]
  	Vector potential
  	\[
  	w^{(\infty)}={\nabla}\times {\bf A},
  	\]
  	where 
  	\bas
  	{\bf A}&=&- {\frac{1}{2}x({y}^{2}+{z}^{2} )} 
  	( z\cdot\mathbf{ {e}_{y}}-y\cdot\mathbf{ {e}_{z}})
  -
  	{ {  \frac{ 1}{2}({y}^{2}+{z}^{2} ) }}\cdot\mathbf{ {e}_{x}}	- \frac { b_0 ({y}^{2}+{z}^{2} )^2 }{4} \cdot\mathbf{ {e}_{x}}
  		\\&&
  	-\sum_{i={\mathtt{r}}}^{\infty} {\frac{a_i}{2} z}
  	{x^{2i+1}\left(\mathbf{ {e}_{y}}+{ {y}}\cdot\mathbf{ {e}_{z}}\right)}
    	-\sum_{i={\mathtt{s}}}^{\infty}
  	{ {  \frac{ b_i}{2}({y}^{2}+{z}^{2} ) {x}^{2i}}}\cdot\mathbf{ {e}_{x}}.
  	\eas
  	\item[4.] 
   Monge representation  or	Clebsch representation  for \(y\neq 0\)
  	\bas
  	w^{(\infty)}&=&-\left(\nabla \left( x^2-\frac{1}{4} (y^2+z^2)\right)
  	+ x\nabla x	\right)-\left(\nabla\left(zy ( {y}^{2}+{z}^{2})\right)
  	-{\frac {z}{y}}\nabla\left({y}^{2}\left( {y}^{2}+{z}^{2} \right) \right) \right)
  	\\&&
  	  	    -\left(\nabla\left(zy\right)-{\frac {z}{y}}\nabla\left({y}^{2}\right) \right)	+   \sum_{i={\mathtt{r}}}^{\infty} a_i\left( \frac{1}{2i}\nabla \left( { {x}^{2i+2}}\right)- x^{2i}\nabla \left(\frac{1}{4} \left( 2\,i+1 \right)  \left( {y}^{2}+{z}^{2} \right) +\frac{1}{2i}{
  		{{x}^{2}}}
  	\right)\right)
  	\\&&
  	-\sum_{i={\mathtt{s}}}^{\infty} b_i\left(\nabla\left(zy{x}^{2i}\right)-{\frac {z}{y}}\nabla\left({y}^{2}{x}^{2i}\right) \right).
  	\eas
  \end{itemize}
For all representations are  given above  the constants 	\(b_i\)  for all \(i \in \mathbb{N}_0,\) 
satisfy  the conditions that are  given in the 
{\em  Theorem  \ref{SIMNFT}}.
	\end{thm}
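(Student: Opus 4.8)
The plan is to exploit the fact that the unique normal form \(w^{(\infty)}\) of \eqref{SIMNF} lies inside \(\mathscr{L}\), so that each of the four representations is produced by applying the matching result of Section \ref{repres} to the individual generators of \eqref{SIMNF} and reassembling. The key observation I would record first is that, comparing \eqref{Ht} and \eqref{Rt} with \eqref{cylindFT} and \eqref{cylindTT}, one has \({\rm H}^l_k=F^{2l}_k\) and \({\varTheta}^l_k=\Theta^{2l}_k\); hence every summand of \(w^{(\infty)}\) is a generator of \(\mathscr{L}\), and Corollary \ref{Hamiltonian} together with Theorems \ref{Eulerform}, \ref{vectpot}, and \ref{Clebschtheorem} all apply directly (for items~2 and~4 on the domain \(y\neq0\), as the potentials involve \(\arctan(z/y)\) and \(z/y\)). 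Since \(w^{(\infty)}\) is a specific member of the family \eqref{CNF}, the congruence constraints on the \(b_i\) recorded in Theorem \ref{SIMNFT} are inherited by all four displays, which is the content of the closing sentence.

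First I would dispose of the two representations that rely only on linearity. For item~1, the normal form is an instance of \eqref{Eq1}, so Corollary \ref{Hamiltonian} applies verbatim under the change \(r=\rho^2\); the Hamiltonian is read off from the first integral \eqref{intsnf} via that substitution, giving \(H(x,r)=xr^2+a_{\mathtt{r}}rx^{2\mathtt{r}+1}+\sum_{i=\mathtt{r}+1}^{\infty}a_irx^{2i+1}\), while the third equation is unchanged. For item~3, I would use that the curl is linear: applying Theorem \ref{vectpot} to each generator yields a vector potential for that term, and summing over the terms of \eqref{SIMNF} gives the stated \({\bf A}\) with \(w^{(\infty)}=\nabla\times{\bf A}\). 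Beyond collecting the \(\mathbf{e}_x,\mathbf{e}_y,\mathbf{e}_z\) components, no rearrangement is needed here.

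Next I would treat the two representations built from gradients and cross products. For item~2, I would apply the Euler form of Theorem \ref{Eulerform} term by term; every \({\rm H}\)-term shares the common potential \(\tfrac12\arctan(z/y)\), so by bilinearity and antisymmetry of the pairing \((f,g)\mapsto\nabla f\times\nabla g\) the \({\rm H}\)-contributions collect into cross products against \(\nabla\arctan(z/y)\), while the \({\varTheta}\)-terms (including \({\varTheta}^0_0\) and \(b_0{\varTheta}^0_1\)) collect into cross products of a power of \((y^2+z^2)\) against a power of \(x\); matching the exponents \((y^2+z^2)^{k-l+1}\) explains why \({\rm H}^0_1\) appears separately from \(\sum a_i{\rm H}^i_{2i}\). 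For item~4, I would apply Theorem \ref{Clebschtheorem} term by term; the gradient part \(\nabla f_3\) is additive, but the complex-lamellar part \(f_1\nabla f_2\) must be retained per generator because the \(f_1\) differ, so the result is correctly displayed as a sum of individual Clebsch pieces.

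The main obstacle is the sign- and constant-bookkeeping in item~2: since \(\nabla f\times\nabla g=-\nabla g\times\nabla f\), each application of Theorem \ref{Eulerform} may be written in either ordering, and one must fix a consistent orientation so the collected sum reproduces \(w^{(\infty)}\) with the stated coefficients \(\tfrac12,\ \tfrac{b_0}{4},\ \tfrac{a_i}{2},\ \tfrac{b_i}{2(2i+1)}\). A secondary point is to confirm that the formal infinite sums in all four displays are well defined in the graded (formal-series) sense and that the conditions \(b_i=0\) for \(i\equiv_{4\mathtt{r}+1}(\mathtt{s}+\mathtt{r})\) with \(\mathtt{s}\not\equiv_{4\mathtt{r}+1}0\) carry through unchanged, which is immediate from Theorem \ref{SIMNFT}.
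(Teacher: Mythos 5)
Your proposal is correct and follows essentially the same route as the paper, whose proof simply reads ``Follow Corollary \ref{Hamiltonian}, theorems \ref{Eulerform}, \ref{vectpot}, and \ref{Clebschtheorem}, respectively''; your identification \({\rm H}^l_k=F^{2l}_k\), \({\varTheta}^l_k=\Theta^{2l}_k\) and the term-by-term application with linear reassembly is exactly the intended argument, just spelled out in more detail.
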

\bpr
Follow Corollary   \ref{Hamiltonian},  theorems \ref{Eulerform}, \ref{vectpot}, and  \ref{Clebschtheorem}, respectively.
See \cite{GazorMokhtariSandersBtriple} for the  relevant result.
\epr
\section{Practical formulas}\label{formula}
In this section, we would like to  give some fruitful formulas 
which are fundamentally 
 significant for applications.
First, some necessary relations between the coefficients  of given  Hopf-pitchfork  differential system are provided.
These relations guarantee that the classical normal form of the given   system up to third order  belongs to  \(	{\mathscr{L}}^{\z}.\)
Then, several formulas for the unique normal form's coefficients  of the  system \(w\in 	{\mathscr{L}}^{\z}\) are given. Note that  all  of the  results in this part  valid up to third order truncation.
It is quite possible to derive these computations for any finite order using our Maple program.
Finally, we  illustrate our results  with the modified Chua's circuit.

Consider the  Hopf-pitchfork  differential system governed by
\begin{equation}
\label{4NF}
\left\{
\begin{aligned}
\frac{{\rm d}{x}}{{\rm d }t}&=\sum a_{i,j,k} x^{i} y^{j}z^{k},
\\
\frac{{\rm d}{y}}{{\rm d }t}&= z+\sum b_{i,j,k} x^{i} y^{j}z^{k},
\\
\frac{{\rm d}{z}}{{\rm d }t}&=-y+\sum c_{i,j,k} x^{i} y^{j}z^{k},
\end{aligned}
\right.
\end{equation}
where  \( i+j+k=3\)  and the monomials   \(x^{i} y^{j}z^{k}\) are assumed to be   odd functions. 
	If the following relations between the coefficients of cubic terms of system \eqref{4NF} hold,
	then  the classical normal form of  this system up to third  order  belongs to \(	{\mathscr{L}}^{\z}.\)
			\ba
	\label{r1}
	a_{1,0,2}&=&-2c_{{0,2,1}}-6c_{{0,0,3}}-6b_{{0,3,0}}-2b_{{0,1,2}}-2a_{{1,
			2,0}},
	\\\label{r2}
	c_{2,0,1}&=&-b_{{2,1,0}}-3a_{{3,0,0}}.
	\ea
Following Theorem \ref{SIMNFT}, 
the unique normal  form  of \eqref{4NF} under the above  conditions is as
\begin{equation*}
\left\{
\begin{aligned}
	\frac{{\rm d}{x}}{{\rm d }t}&=a_02x\rho^2+a_1{x}^{3},
\\
\frac{{\rm d}{\rho}}{{\rm d }t}&=-\frac{a_0}{2}\rho^3-a_1 \frac{3}{2} {x}^{2}{\rho},
\\
\frac{{\rm d}{\theta}}{{\rm d }t}&=1+b_0\rho^2+b_1{x}^{4},
\end{aligned}
\right.
\end{equation*}
		where  the coefficients of previous system are   given explicitly by
	\bas
	a_0&:=&\frac{-1}{4}(c_{{0,2,1}}+3c_{{0,0,3}}+3b_{{0,3,0}}+b_{{0,1,2}}),
	\\
	b_0&:=&
	\frac{1}{8}(-3c_{{0,3,0}}-c_{{0,1,2}}+b_{{0,2,1}}+3b_{{0,0,3}}),
	\\
		a_1&:=&a_{{3,0,0}},
		\qquad
	b_1:=-\frac{1}{2}(c_{{2,1,0}}+b_{{2,0,1}}).
	\eas
	The above relations, Equation \eqref{r1}, and \eqref{r2}
 are  obtained  using 
our Maple program and employing  beneficial formulas regarding the coefficients  of the  classical normal form of  Hopf-zero singularities
derived in \cite{AlgabaHopfZ}.

 We finish with an example.
\begin{exm}
Consider modified  Chua's oscillator defined  by  the set of ordinary differential equations
\begin{equation}
\label{Chua}
\left\{
\begin{aligned}
\frac{{\rm d}{x}}{{\rm d}t}&=-\gamma x-\beta y,
\\
\frac{{\rm d}{y}}{{\rm d }t}&= z-y+x+{\mu_1}{z}^{2}y,
\\
\frac{{\rm d}{z}}{{\rm d }t}&=\alpha\left( -c z+y-a{z}^{3}
\right) +{\mu_2}{z}^{2}y.
\end{aligned}
\right.
\end{equation}
This system  possesses the  \(\z\)-symmetry given in  \eqref{zzl}.
 If one puts  
 \((\mu_1,\mu_2)=(0,0)\) in \eqref{Chua}, then  one gets the  Chua's oscillator.
	The linearization of this  system at origin  has eigenvalues 
	\(0, \pm{ i}\omega_0\) 
			when 
			
		\ba\label{Cond1}
	(c,\beta)=\Big(-\frac{  \gamma+1  }{\alpha}
	, -{\frac { \gamma\left( \alpha+\gamma+1 \right)  }{\gamma+1}}\Big),\quad\frac{ \left( \gamma+1 \right) ^{3}+\alpha \left( 2
	\gamma+1 \right) }{\gamma+1}>0,
	\ea
 where 
\(
\omega_0^2:=	 {-{\frac { \left( \gamma+1 \right) ^{3}+\alpha \left( 2
			\gamma+1 \right) }{\gamma+1}}},\) \(a\neq0,\)  \(\alpha\neq0,\) and \(\gamma\approx 0,\)   see \cite{algabachua,AlgabaHopfZ}. The following linear transformations  and rescaling the time 
		 		\bas
		 			 		x &	\rightarrow &
		 			 		\frac{\left( \gamma+1
		 			 			\right)}{ \left( \alpha+\gamma+1 \right)\gamma}\Big(
		 			 		 { { 	\omega_{{0}}\left( \gamma+1 \right)
		 			 			}}x+{ {
		 			 				\Big( \alpha+( \gamma+1 ) ^{2} \big)}}y-{\frac { \gamma \alpha}{
		 			 				\gamma+1}}z\Big),
		 		\\
		 			y 	&\rightarrow & 
		 			\frac{\left( \gamma+1
		 				\right)}{ \left( \alpha+\gamma+1 \right)}\left(
		 			{\frac {  \omega_{{0}}}{ \gamma}}x+
		 				y+z\right),
		 			\\
		 				z 	&\rightarrow& y+z,
		 				\\
		 				t&\rightarrow& \omega_0 t,
		 					\eas
		 						 			bring the  linear part of \eqref{Chua} to   the Jordan canonical form. 
		 						 			If in addition to the conditions already cited \eqref{Cond1}, we require						 			
		  \bas
	  \mu_1:=
	  {\frac {3\omega_0{\alpha}^{2}a \left( 2\alpha+ \left( \gamma+1 \right) ^{2} \right) }{ \left( 2\alpha- \left( \gamma+1 \right) ^{2}
	  		\right)  \left(\alpha -2 \left( \gamma+1 \right) ^{2}\right) }}
	 ,\qquad
  		\mu_2:=
  	 {\frac {3\omega_0{\alpha}^{2}a \left( \gamma+1 \right)  \left( 3\alpha-
  			\left( \gamma+1 \right) ^{2} \right) }{ \left( 2\alpha- \left( 
  			\gamma+1 \right) ^{2} \right)  \left(\alpha -2 \left( \gamma+1 \right) ^{2
  			} \right) }},
  			  	  \eas
  			  	  where \(\alpha \neq 2( \gamma+1 ) ^{2},\)   then the  classical normal form  of   modified Chua's oscillator up to third order belongs to \(	{\mathscr{L}}^{\z}\). 
  			  	  The previous  relations are derived 
  			  	  with the aid of  equations \eqref{r1} and \eqref{r2}.  
  			  	    			 Following Theorem \ref{SIMNFT},  the  unique  normal form  of \eqref{Chua} is given by 
  			  	  \bas
  			  	  C^{(\infty)}&:=&	
  			  	  \frac{{\alpha}^{3}a ( \alpha+
  			  	  	( \gamma+1 ) ^{2} )}{	\omega_{{0}}( \alpha+\gamma+
  			  	  	1 )\left( 2\alpha- ( \gamma+1 ) ^{2} \right) } \Big( {\frac { 3 ( \gamma+1 )
  			  	  	}{2\gamma  {
  			  	  }}}
  			  	  {\rm H}^0_1
  			  	  +
  			  	  {\frac { 2 {
  			  	  			\alpha}\gamma}{( 
  		  	  			\gamma+1 )( \alpha+\gamma+1 )   }
  		  	  {\rm H}^1_2\Big)}
  	  	  \\&&
    	  +\frac{3{\alpha}^{3}a \left(( 4\gamma+5 )  ( \gamma+1 ) ^{2}+\alpha
    	  ( 3\gamma+5 )
    	  \right)}{2 ( \alpha+\gamma+1 )\left( 2\alpha-
    	  ( \gamma+1 ) ^{2} \right)	\left( \alpha-2 ( \gamma+1 ) ^{2} \right)}  \Big(
      {\frac {( \gamma+1 ) ^{2} }{4
      	{\gamma}}}
  \varTheta^0_1
+
{\frac {{\alpha} }{    ( \alpha+\gamma+1 )}}
\varTheta^1_2\Big).
\eas
\end{exm}
We refer the reader to
\cite{GazorMokhtariInt} for relevant results.
\section{Acknowledgement}
I would like to thank  
 my  thesis advisor, Professor Majid Gazor, for the encouragement  to  write  this  paper and 
 Professor Jan A. Sanders for  invaluable remarks and comments.
\bibliographystyle{plain}

\end{document}